\documentclass[12pt]{article}

\usepackage{amssymb,amsmath,graphics,color,eepic,amsthm,tikz}
%
%

\def\bbblb#1{\setbox\@tempboxa\hbox{$#1[$}%
             \@tempdimb\wd\@tempboxa
             \copy\@tempboxa \kern -.85\@tempdimb
             \copy\@tempboxa \kern -.65\@tempdimb\box\@tempboxa}
\def\bbbrb#1{\setbox\@tempboxa\hbox{$#1]$}%
             \@tempdimb\wd\@tempboxa
             \copy\@tempboxa \kern -.65\@tempdimb
             \copy\@tempboxa \kern -.85\@tempdimb\box\@tempboxa}

\def\a{{\boldsymbol a}}
\def\b{{\boldsymbol b}}
\def\m{{\boldsymbol m}}
\def\c{{\boldsymbol c}}

\def\q{{\boldsymbol q}}
\def\e{{\boldsymbol e}}
\def\h{{\boldsymbol h}}
\def\B{{\boldsymbol B}}

\def\p{{\boldsymbol p}}
\def\s{{\boldsymbol s}}
\def\T{{\boldsymbol T}}

\def\bPsi{{\boldsymbol \Psi}}
\def\bPhi{{\boldsymbol \Phi}}

\def\q{{\boldsymbol q}}

\def\bbbr{{\Bbb R}}

\def\bbbc{{\Bbb C}}

\def\bbbz{{\Bbb Z}}
\def\bbbd{{\Bbb D}}
\def\wedgecomma{\mathop{\wedge}\limits_{'}}
\arraycolsep=2pt
\def\ad{\mbox{ad}\,}

\def\rank{\mbox{rank}\,}
\def\tr{\mbox{tr}\,}

\def\im{{\rm Im}\,}

\def\id{\mbox{Id}\,}

\def\fr#1{{\mathfrak{#1}}}
\def\openone{\leavevmode\hbox{\small1\kern-3.3pt\normalsize1}}


\def\biglb{\big[\hspace*{-1.mm}\big[}
\def\bigrb{\big]\hspace*{-1.mm}\big]}

\def\bPsi{{\boldsymbol \Psi}}

\def\bPhi{{\boldsymbol \Phi}}

\newtheorem{theorem}{{Theorem}}[section]

\newtheorem{lemma}{Lemma}[section]

\numberwithin{equation}{section}
\textwidth=17cm  \hoffset=-1.5cm
\textheight=22cm \voffset=-1cm


\providecommand{\keywords}[1]{\textbf{\textit{Keywords --}} #1}

\begin{document}

\title{ Kulish-Sklyanin type models: integrability and reductions}

\author{V. S.  Gerdjikov$^{1,2}$\footnote{Associated member of the Institute for Nuclear Research and Nuclear Energy, Bulgarian
Academy of Sciences, 72 Tzarigradsko Chaussee Blvd., 1784 Sofia, Bulgaria }\\
  \\[5pt]
{\sl $^1$ Institute of Mathematics and Informatics } \\ {\sl Bulgarian Academy of Sciences, }\\
{\sl Acad. Georgi Bonchev Str., Block 8, 1113 Sofia, Bulgaria}\\[5pt]
{\sl  $^2$ Institute for Advanced Physical Studies} \\
{\sl  New Bulgarian University,}\\
{\sl 21 Montevideo Street, Sofia 1618, Bulgaria}}

\date{}      
\maketitle

\begin{abstract}
We start with a Riemann-Hilbert problem (RHP)  related to a {\bf BD.I}-type symmetric spaces $SO(2r+1)/S(O(2r -2s +1)\otimes O(2s))$,
$s\geq 1$. We consider two Riemann-Hilbert problems: the first formulated on the real axis $\mathbb{R}$ in the complex $\lambda$-plane; the second one
is formulated on  $\mathbb{R} \oplus i\mathbb{R}$. The first  RHP for $s=1$ allows one to solve the Kulish-Sklyanin (KS)  model; the second RHP
is relevant for a new type of KS model. An important example for nontrivial deep reductions of KS model is given. Its effect on the scattering matrix
is formulated. In particular we obtain  new 2-component NLS equations. Finally, using the Wronskian relations we demonstrate that the inverse scattering method for
KS models may be understood as a generalized Fourier transforms. Thus we have a tool to derive all their fundamental properties, including the hierarchy of
equations and he hierarchy of  their Hamiltonian structures.

\end{abstract}
\keywords { Symmetric spaces, Multi-component NLS equations,  Lax representation, The group of reductions }

\section{Introduction}

The Kulish-Sklyanin model \cite{KuSkl} may be viewed as an alternative version of the famous Manakov model \cite{ma74}.
Both models have important applications in physics, of which we will briefly mention the Bose-Einstein condensates, see
\cite{IMW04,LLMML05,CYHo,OM,Ho}. Both models are particular cases of the family of multi-component NLS (MNLS) equations,
which are related to the symmetric spaces \cite{ForKu*83}. They are integrable in the sense that they possess Lax representations
whose Lax operator is linear in $\lambda$ with potential $Q(x,t)$ taking value in simple Lie algebra (generalized Zakharov-Shabat system).
The solutions for the direct and the inverse scattering problem for such operators by now are  well known: see
\cite{ZMNP,FaTa,AblPrinTru*04,VSG2,ContMat,GGK05a,VRG-WMo,GeYaV}. The solutions of these equations can be derived effectively using the
dressing Zakharov-Shabat method \cite{ZaSha,Za*Mi,I04}, see also \cite{VSG2,1,GGK05a}.

Another important trend in soliton theory is based on the derivation of new MNLS equations by applying Mikhailov reductions
\cite{Mikh} to generic MNLS, see  \cite{1,I04,GGK05a,GGK05b,VG-RomJP}. In particular, recently we have been able to derive
two new 2-component MNLS \cite{Pliska16}; below we find a third example of such equation which follows from a generic KSm.

Most of the techniques mentioned above can be applied also to Lax operators which are quadratic in $\lambda$. Simplest cases of
such Lax operators related to the algebra $sl(2)$ have been considered before \cite{GIK-TMF44,GI-BJP10a,GI-BJP10b}. Recently an
alternative approach based on the Riemann-Hilbert problem allowed one to generalize these results to any simple Lie algebra
\cite{Pliska12,Pliska16}.
Therefore we will start with a RHP and then will demonstrate that: i) it is equivalent to the traditional
approach via the Lax representation; ii) it is a convenient tool to derive more general Lax representations
in which both operators are polynomial in $\lambda$ of order $k\geq 2$;  iii) allows one to analyze the
spectral properties not only of the Lax operators, but also of the relevant recursion operators. In conclusion
we will demonstrate that the RHP can be viewed as most effective approach to the fundamental properties
of the NLEE and especially to their hierarchies of Hamiltonian structures.

The family of Riemann-Hilbert problems (RHP) \cite{Pliska12} is of the form:
\begin{equation}\label{eq:rhp}\begin{aligned}
 \xi^+(x,t,\lambda) &=  \xi^-(x,t,\lambda) G(x,t,\lambda), &\qquad \lambda^k &\in \mathbb{R}, \\
 i \frac{\partial G}{ \partial x } &-\lambda^k [J, G(x,t,\lambda)] =0, &\qquad   i \frac{\partial G}{ \partial t } -
 \lambda^{2k} [J, G(x,t,\lambda)]&=0,
\end{aligned}\end{equation}
where $\xi^+(x,t,\lambda)$ take values in a simple Lie group $\mathcal{G}$ and $J$ is an element to
the corresponding simple Lie algebra $\mathfrak{g}$.
The specific choice of $k$, $\mathfrak{g}$ and $J$ determine the class and the type of NLEE under study.

The value of $k$ determines the order of the Lax operator $L$ as a polynomial in $\lambda$. The majority
of Lax operators $L(\lambda)$ that have been used to study most important NLEE are linear in $\lambda$, so $k=1$. There
have been also important examples of Lax operators that are quadratic in $\lambda$, i.e. $k=2$ \cite{GI-BJP10a,GI-BJP10b,GIK-TMF44}.
Note, that the order $p$ of the second Lax operators $M(\lambda)$ is always $p\geq k$.
Below we will stick to these two values of $k$ and will assume that $p=2k$, but the method outlined below can easily be extended to any
finite values of $k$ and $p$.
We will not touch here the other important classes of Lax operators that are rational functions of $\lambda$ \cite{Za*Mi}.

The choice of $\mathfrak{g}$ and $J\in \mathfrak{h}$ -- the Cartan subalgebra of $\mathfrak{g}$, determine the structure of the phase space of the NLEE;
this includes such important things as the number of the independent functions entering the NLEE and the Poisson brackets between
them. This has been well known for the generalized Zakharov-Shabat systems \cite{ZMNP,ZaSha}; the relevant Lax operators
 are linear in $\lambda$ ($k=1$) and take the form:
 \begin{equation}\label{eq:L}\begin{split}
 L\psi \equiv i \frac{\partial \psi}{ \partial x } + (Q(x,t) - \lambda J)\psi(x,t,\lambda) =0.
 \end{split}\end{equation}
The phase space $\mathcal{M}$ of the relevant NLEE or, in other words, the space of allowed potentials is defined as:
\begin{equation}\label{eq:M}\begin{split}
 \mathcal{M} \equiv \left\{ Q(x,t) = [J,X(x,t)], \qquad X(x,t) \in \mathfrak{g} \right\},
\end{split}\end{equation}
i.e. $Q(x,t)$ belongs to the co-adjoint orbit of $\mathcal{O}_J \in \mathfrak{g}$ passing through $J$.

We note here that if $\rank \mathfrak{g}>1$ and we choose $J$ with different real eigenvalues then we will
have NLEE on a homogeneous spaces. As typical representatives here we can mention the $N$-wave equations \cite{ZMNP}.
For other, more special choices of $J$ and $\mathfrak{g}$ our construction is on symmetric space.
These facts have been well known for very long time \cite{ma74,ForKu*83,KuSkl}. They can also be generalized for $k\geq 2$.

In Section 2 below we give some preliminaries concerning the structure of the BD.I symmetric spaces, RHP and
related Lax pairs and integrable equations, as well as the reduction group \cite{Mikh}.
In Section 3 we provide the construction of the Lax representations via the RHP method for $k=1$ and $k=2$.
The next Section 4 we give example of  $\mathbb{Z}_6$-reduction of a generic KSm
 thus deriving new integrable 2-component
NLS equations, see \cite{Pliska16}. In Section 5 we formulate the consequences of these reductions on the scattering matrix
and scattering data. In the last Section 6 we formulate the integrability properties of the KS type models. These are based
on the Wronskian relations and the `squared` solutions of the Lax operators. We prove that they are complete set of functions in the
phase space $\mathcal{M}$ and derive the expansions of $Q(x,t)$ and its variation $\ad_J^{-1} \delta Q$ over
the squared solutions. These results allow us  to formulate the fundamental properties of the NLEE and their Hamiltonian hierarchies.

\section{Preliminaries}

\subsection{On BD.I-type symmetric spaces}

For our specific purposes we will choose the simple Lie group $\mathcal{G}\simeq SO(2r+1)$, its Lie algebra
$\mathfrak{g}\simeq so(2r+1)$. The orthogonality condition that we will use below is
\begin{equation}\label{eq:ort}\begin{split}
X \in so(2r+1) \qquad \mbox{iff} \qquad X+S_0X^T S_0^{-1} =0 , \qquad S_0 = \sum_{k=1}^{2r+1}(-1)^{k+1} E_{k,2r+2-k},
\end{split}\end{equation}
where $E_{kn}$ is a $2r+1 \times 2r+1$-matrix with $(E_{kn})_{pj}=\delta_{kp}\delta_{nj}$.
This choice ensures that the  Cartan subalgebra $ \mathfrak{h}$ consists of diagonal matrices.
The element  $J\in \mathfrak{h}$ is chosen as:
\begin{equation}\label{eq:J}\begin{split}
J = \sum_{k=1}^{s}H_{e_k} = \left(\begin{array}{ccc} \openone_s & 0 & 0 \\ 0 & 0 & 0 \\ 0 & 0 & -\openone_s  \end{array}\right).
\end{split}\end{equation}
We will assume that the reader is familiar with the theory of simple Lie groups and algebras, see \cite{Helg}.
The system of positive roots of $so(2r+1)$ is well known \cite{Helg}:
\[ \Delta^+ = \{ e_i \pm e_j, \quad 1\leq i <j \leq r, \qquad e_j , \quad 1\leq j \leq r \}. \]
Here we also mention that
using $J$ and the Cartan involution one can introduce a $\mathbb{Z}_2$-grading in $\mathfrak{g}$:
\begin{equation}\label{eq:Z2}\begin{split}
C_1 &= \exp (\pi i J)= \left(\begin{array}{ccc} -\openone_s & 0 & 0 \\ 0 & \openone_{2r-2s+1} & 0 \\ 0 & 0 & -\openone_s  \end{array}\right), \qquad
\mathfrak{g} = \mathfrak{g}^{(0)} \oplus  \mathfrak{g}^{(1)} , \\
 \mathfrak{g}^{(0)} &\equiv \{
X\in  \mathfrak{g}^{(0)} \colon C_1X C_1^{-1}=X \}, \qquad \mathfrak{g}^{(1)} \equiv \{ Y\in  \mathfrak{g}^{(1)} \colon C_1Y C_1^{-1}=-Y  \}.
\end{split}\end{equation}
The $\mathbb{Z}_2$-grading means that
\begin{equation}\label{eq:Z2a}\begin{split}
[X_1, X_2] \in \mathfrak{g}^{(0)} , \qquad [X_1, Y_1] \in \mathfrak{g}^{(1)} , \qquad [Y_1, Y_2] \in \mathfrak{g}^{(0)} ,
\end{split}\end{equation}
and provides the local structure of the symmetric space of BD.I-class $SO(2r+1) /(SO(2r-2s+1)\otimes SO(2s))$.
We will see that this construction is directly related to the KSm for $s=1$.
It is well known that the set of positive roots and the  system of simple roots of $B_r$ are:
\begin{equation}\label{eq:B-r}\begin{split}
\Delta_{B_r}^+ &\equiv \{ e_i - e_j , \quad e_i+e_j , \quad 1 \leq i < j \leq r; \qquad e_j, \quad 1\leq j \leq r \} \\
\pi_{B_r} &\equiv \{ \alpha_k = e_k - e_{k+1}, \quad \alpha_r= e_r, \quad  1\leq k \leq r-1.\}
\end{split}\end{equation}
Introducing $J$ as in (\ref{eq:J}) we split the system of positive roots into two subsets:
\begin{equation}\label{eq:Jroot}\begin{split}
\Delta_1^+ = \{\beta, \quad  \beta(J) = 1\} \qquad \Delta_0^+ = \{\beta, \quad  \beta(J) = 0 \mod 2\}.  .
\end{split}\end{equation}
Below we will consider two cases with $s=1$ and $s=3$:
\begin{equation}\label{eq:Delta}\begin{aligned}
& s=1, r\geq 3 &\quad &\begin{aligned} \Delta_0^+ &= \{  e_i \pm  e_j \quad 2\leq i < j \leq r, \quad e_j \quad 2\leq i  \leq r \}  , \\
\Delta_1^+ &= \{ e_1 \pm e_j ,  \quad 2\leq j  \leq r , \quad e_1 \};
\end{aligned} \\
& s=3, r=4 &\quad &\begin{aligned} \Delta_0^+ &= \{  e_1 \pm  e_2 , e_1 \pm  e_3 , e_2 \pm  e_3 , e_1, e_2 , e_3 \}  , \\
\Delta_1^+ &= \{ e_1 \pm e_4 ,  e_2 \pm e_4 ,  e_3 \pm e_4 ,   e_4 \}; \end{aligned}
\end{aligned}\end{equation}
The potential of the Lax operator as well as the coefficients $Q_{2s-1}$ are given by:
\begin{equation}\label{eq:Q}\begin{split}
 Q(x,t) &= \sum_{\alpha \in \Delta_1^+}^{} (q_\alpha E_{\alpha} + p_\alpha E_{-\alpha} ) \in \mathfrak{g}^{(1)}.
\end{split}\end{equation}

\subsection{The RHP and Lax representations}

We will start by formulating the RHP (\ref{eq:rhp}). Given the sewing function $G(x,t,\lambda)$ find
the functions $\xi^\pm(x,t,\lambda)$ taking values in the simple Lie group $\mathcal{G}$ and analytic for
$\im \lambda^k \gtrless 0$ such that eq. (\ref{eq:rhp}) holds.
It is natural to impose also the normalization condition
.\begin{equation}\label{eq:norm}\begin{split}
\lim_{\lambda \to \infty} \xi^\pm(x,t,\lambda) = \openone,
\end{split}\end{equation}
which ensures that the RHP has unique regular solution, on Figure 1 we .show the analyticity regions and
the contours that will be used below for $k=1$ and $k=2$.

\begin{theorem}[Zakharov-Shabat theorem]\label{thm:}
Let $\xi^\pm (x,t,\lambda) $ be a solution of the RHP  whose sewing function satisfies the equations
in (\ref{eq:rhp}). Then $\xi^\pm (x,t,\lambda) $ is a fundamental analytic solution (FAS) of the operators:
\begin{equation}\label{eq:lm}\begin{aligned}
L \xi^\pm  &\equiv i \frac{\partial \xi^\pm }{ \partial x } + U(x,t,\lambda) \xi^\pm (x,t,\lambda) - \lambda^k [J,\xi^\pm (x,t,\lambda)]=0, \\
M \xi^\pm  &\equiv i \frac{\partial \xi^\pm }{ \partial x } + V(x,t,\lambda) \xi^\pm (x,t,\lambda) - \lambda^{2k} [J,\xi^\pm (x,t,\lambda)]=0, \\
\end{aligned}\end{equation}
where $U(x,t,\lambda)$ and $V(x,t,\lambda)$ are polynomials in $\lambda$ of order $k-1$ and $2k-1$ respectively:
\begin{equation}\label{eq:UV}\begin{split}
U(x,t,\lambda) = \lambda^k J - (\lambda^k \xi J\hat{\xi}(x,t,\lambda))_+, \qquad V(x,t,\lambda) =
\lambda^{2k} J - (\lambda^2k \xi J\hat{\xi}(x,t,\lambda))_+,
\end{split}\end{equation}

\end{theorem}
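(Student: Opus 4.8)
The plan is to show that the analyticity and the jump condition of the RHP force $\xi^\pm$ to satisfy the two linear operators $L$ and $M$ with coefficients $U,V$ that are polynomial in $\lambda$ of the stated degrees. The standard Zakharov--Shabat strategy is to form the logarithmic-derivative combinations and exploit the fact that the jump matrix $G$ is annihilated by the relevant differential operators, so that those combinations have no jump across the contour and are therefore entire in $\lambda$; the normalization then pins them down completely via Liouville's theorem.

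\textbf{Step 1 (eliminate the jump).} First I would introduce the combination
\begin{equation*}
\mathcal{L}^\pm(x,t,\lambda) \equiv \left( i \frac{\partial \xi^\pm}{\partial x} - \lambda^k [J,\xi^\pm] \right) (\xi^\pm)^{-1},
\end{equation*}
and compute its jump across $\lambda^k \in \mathbb{R}$ using $\xi^+ = \xi^- G$. Differentiating the jump relation in $x$ and substituting the first equation of (\ref{eq:rhp}), namely $i\,\partial_x G = \lambda^k [J,G]$, the $G$-dependent terms should cancel, giving $\mathcal{L}^+ = \mathcal{L}^-$ on the contour. Hence $\mathcal{L}^\pm$ extends to a single function analytic in the whole $\lambda$-plane.

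\textbf{Step 2 (control the growth and apply Liouville).} Next I would examine the behaviour as $\lambda \to \infty$. Using the normalization (\ref{eq:norm}), $\xi^\pm \to \openone$, and expanding $\xi^\pm = \openone + \xi_1 \lambda^{-1} + \cdots$, the term $\lambda^k [J,\xi^\pm](\xi^\pm)^{-1}$ grows like a polynomial of degree $k$ in $\lambda$ while $i\,\partial_x \xi^\pm (\xi^\pm)^{-1}$ stays bounded. Because $\mathcal{L}^\pm$ is entire and grows at most like $\lambda^k$, it must equal a polynomial in $\lambda$ of degree $k$. Writing this polynomial as $\lambda^k J - U(x,t,\lambda)$ with $U$ of degree at most $k-1$, and reading off the coefficients of the asymptotic expansion, I would identify $U = \lambda^k J - (\lambda^k \xi J \hat{\xi})_+$, the $(\cdot)_+$ denoting the polynomial (nonnegative-power) part of the Laurent expansion at $\lambda=\infty$. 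Rearranging $\mathcal{L}^\pm = \lambda^k J - U$ into $i\,\partial_x \xi^\pm + U \xi^\pm - \lambda^k [J,\xi^\pm] = 0$ recovers the operator $L$. The identical argument applied to the $t$-derivative combination with $i\,\partial_t G = \lambda^{2k}[J,G]$ and growth bound $\lambda^{2k}$ yields the operator $M$ and the formula for $V$.

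\textbf{The main obstacle} I expect is Step 2: one must justify carefully that the $x$- and $t$-dependence of the entire function produced in Step 1 is exactly polynomial of the claimed degree, and that the projector $(\cdot)_+$ correctly extracts it. This requires a clean asymptotic expansion of $\xi^\pm$ uniformly in $\lambda$, together with the observation that $\xi J \hat{\xi} \to J$ as $\lambda \to \infty$ so that the leading term $\lambda^k J$ (respectively $\lambda^{2k} J$) cancels and leaves a genuine polynomial of one degree lower. I would also need to check the compatibility of the two operators — that $[L,M]=0$ holds as a consequence of both equations in (\ref{eq:rhp}) — though since the statement only asserts that $\xi^\pm$ is a common FAS, establishing the two linear equations separately suffices for the theorem as stated.
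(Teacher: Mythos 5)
Your proposal is correct and follows essentially the same route as the paper: the paper's proof introduces $g^\pm = i\,\partial_x\xi^\pm(\xi^\pm)^{-1} + \lambda^k\,\xi^\pm J(\xi^\pm)^{-1}$ (which is exactly your $\mathcal{L}^\pm + \lambda^k J$), proves $g^+=g^-$ from the prescribed $x,t$-dependence of $G$, and then invokes Liouville's theorem together with the canonical normalization to conclude polynomial dependence and identify the coefficients via (\ref{eq:UV}). The only flaw is a bookkeeping slip in your Step 2: with your definition, $\mathcal{L}^\pm$ is entire of degree $k-1$ and equals $-U$ (not $\lambda^k J - U$), since the leading $\lambda^k J$ cancels inside the commutator --- a point you yourself note at the end, and which still yields the correct rearranged equation $L\xi^\pm = 0$.
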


\begin{proof}[Idea of the proof.]
 Consider the functions
 \begin{equation}\label{eq:gpm}\begin{split}
g^\pm (x,t,\lambda) &= i \frac{\partial \xi^\pm}{ \partial x } (\xi^\pm)^{-1} (x,t,\lambda) + \lambda^k \xi^\pm (x,t,\lambda) J (\xi^\pm)^{-1} (x,t,\lambda), \\
f^\pm (x,t,\lambda) &= i \frac{\partial \xi^\pm}{ \partial t } (\xi^\pm)^{-1} (x,t,\lambda) + \lambda^{2k} \xi^\pm (x,t,\lambda) J (\xi^\pm)^{-1} (x,t,\lambda),
 \end{split}\end{equation}
 and using the explicit $x$ and $t$-dependence of $G(x,t,\lambda)$ prove that $g^+ (x,t,\lambda)=g^- (x,t,\lambda)$ and
 $f^+ (x,t,\lambda)=f^- (x,t,\lambda)$. Then, using eq. (\ref{eq:norm}) we find  that
 \begin{equation}\label{eq:gauge}\begin{split}
  \lim_{\lambda \to\infty} g^\pm (x,t,\lambda) = \lambda^k J, \qquad   \lim_{\lambda \to\infty} f^\pm (x,t,\lambda) = \lambda^{2k} J,
 \end{split}\end{equation}
It remains to apply the great Liouville theorem that ensures that the functions  $g^+ (x,t,\lambda)=g^- (x,t,\lambda)$
(resp.  $f^+ (x,t,\lambda)=f^- (x,t,\lambda)$) are analytic on the whole complex $\lambda^k$-plane and therefore are
polynomial in $\lambda$ of order $k$ (resp. $2k$). Of course the coefficients of these polynomials may depend on $x$ and $t$.
 The explicit relations (\ref{eq:UV}) of these coefficients with the solution $\xi^\pm (x,t,\lambda)$ has been
 proposed by  Drinfeld-Sokolov \cite{DriSok}, see also \cite{Pliska12}.

\end{proof}

The  explicit derivation of the Lax pairs is very effective if $\xi^\pm(x,t,\lambda)$ satisfy the canonical
normalization (\ref{eq:norm}). Indeed, in this case we can use the asymptotic expansion:
\begin{equation}\label{eq:xi}\begin{split}
\xi ^\pm(x,t,\lambda) &=   \exp \left( \mathcal{Q}(x,t,\lambda)\right), \qquad \mathcal{Q}(x,t,\lambda)
 =\sum_{s=1}^{\infty} \lambda^{-2s+1} Q_{2s-1} .
\end{split}\end{equation}
Obviously such choice for $\xi^\pm (x,t,\lambda)$ involves a $\mathbb{Z}_2$ reduction, which can be
formulated in several equivalent ways, e.g.:
\begin{equation}\label{eq:xipm}\begin{split}
& \mbox{a)} \qquad \xi ^+(x,t, -\lambda) = (\xi ^-)^{ -1}(x,t,\lambda), \qquad Q_{2s-1} \in \mathfrak{g}^{(1)},
\\  &\mbox{b)} \qquad
 \xi ^+(x,t, \lambda^*) = (\xi ^{-}) ^\dag (x,t,\lambda) \qquad \mbox{iff} \qquad Q_{2s-1} =-Q^\dag _{2s-1},
\end{split}\end{equation}
In order to derive the relevant NLEE in terms of $Q_{2s-1}$ we will use the formulae:
\begin{equation}\label{eq:xias}\begin{split}
\xi ^\pm J\hat{\xi}^\pm(x,t,\lambda) & = J + \sum_{p=1}^{\infty} \frac{1}{p!}\; \ad_{\mathcal{Q}}^p J. \qquad
\frac{\partial \xi^\pm }{ \partial x }\hat{\xi}^\pm(x,t,\lambda)= \frac{\partial \mathcal{Q}}{ \partial x }
 + \sum_{p=1}^{\infty} \frac{1}{(p+1)!}\; \ad_{\mathcal{Q}}^p \frac{\partial \mathcal{Q}}{ \partial x }.
\end{split}\end{equation}
which allow us to express the Lax pair coefficients $U_s(x,t)$ and $V_s(x,t)$ in terms of $Q_{2s-1} $ and their
$x$-derivatives. In all our considerations we will need only the first  few terms of these expansions;  for more
details see \cite{Pliska12}.

\subsection{The reduction group}
Following Mikhailov \cite{Mikh} we will also impose additional  reductions using the famous reduction group $G_R $.

$G_R$  is a finite group which preserves the Lax representation (\ref{eq:lm}), i.e. it ensures that the reduction
constraints are automatically compatible with the evolution. $G_R $ must have two realizations: i) $G_R \subset {\rm Aut}\fr{g} $ and ii) $G_R
\subset {\rm Conf}\, \mathbb{C} $, i.e. as conformal mappings of the complex $\lambda $-plane. To each $g_k\in G_R $ we relate a reduction
condition for the Lax pair as follows:
\begin{equation}\label{eq:2.1}
C_k(L(\Gamma _k(\lambda ))) = \eta _k L(\lambda ), \quad C_k(M(\Gamma _k(\lambda ))) = \eta _k M(\lambda ),
\end{equation}
where $C_k\in \mbox{Aut}\; \fr{g} $ and $\Gamma _k(\lambda )\in \mbox{Conf\,} \bbbc $ are the images of $g_k $ and $\eta _k =1 $ or $-1 $
depending on the choice of $C_k $. Since $G_R $ is a finite group then for each $g_k $ there exist an integer $N_k $ such that $g_k^{N_k} =\openone$.
More specifically, below we will consider $\mathbb{Z}_2$-reductions of the form:
\begin{align}\label{eq:U-V.a}
&\mbox{a)} &\quad
 B_1U^{\dagger}(\kappa _1(\lambda ))B_1^{-1} &= U(\lambda ), &\quad B_1(V^{\dagger}(\kappa _1(\lambda ))B_1^{-1} &= V(\lambda ), \\
\label{eq:U-V.b}
&\mbox{b)} &\quad
 B_2U^{T}(\kappa _2(\lambda ))B_2^{-1} &= -U(\lambda ), &\quad B_2(V^{T}(\kappa _2(\lambda ))B_2^{-1} &= -V(\lambda ), \\
\label{eq:U-V.c}
&\mbox{c)} &\quad
 B_3U^{*}(\kappa _1(\lambda ))B_3^{-1} &= -U(\lambda ), &\quad  B_3(V^{*}(\kappa _1(\lambda ))B_3^{-1} &= -V(\lambda ), \\
\label{eq:U-V.d}
&\mbox{d)} &\quad
 B_4U(\kappa _2(\lambda ))B_4^{-1} &= U(\lambda ), &\quad  B_4(V(\kappa _2(\lambda ))B_4^{-1} &= V(\lambda ),
\end{align}
where the automorphisms $B_k$ must of finite order. In the cases (\ref{eq:U-V.a}), (\ref{eq:U-V.b}) and
(\ref{eq:U-V.c}) $B_k$ must be of even order, which in general could be bigger than 2.

Beside the  $\mathbb{Z}_2$-reductions we will impose additional $\mathbb{Z}_p$-reductions with $p >2$:
\begin{equation}\label{eq:Zp}\begin{split}
A_p U(x,t, \kappa_p(\lambda)) A_p^{-1} &= U(x,t, \lambda), \qquad A_p V^*(x,t, \kappa_p(\lambda)) A_p^{-1} = V(x,t,\lambda), \\
A_p (\xi ^+) (x,t, \kappa_p(\lambda)) A_p^{-1} &= (\xi ^+)^{-1} (x,t, \lambda), \qquad \kappa_p(\lambda) =\lambda \omega, \qquad
\omega = e^{2\pi i /p},
\end{split}\end{equation}
where  $A_p$ is an automorphism of $\mathfrak{g}$ of order $p$. Typically we will use a realization of
$A_p$ as an element of the Weyl group of $\mathfrak{g}$.

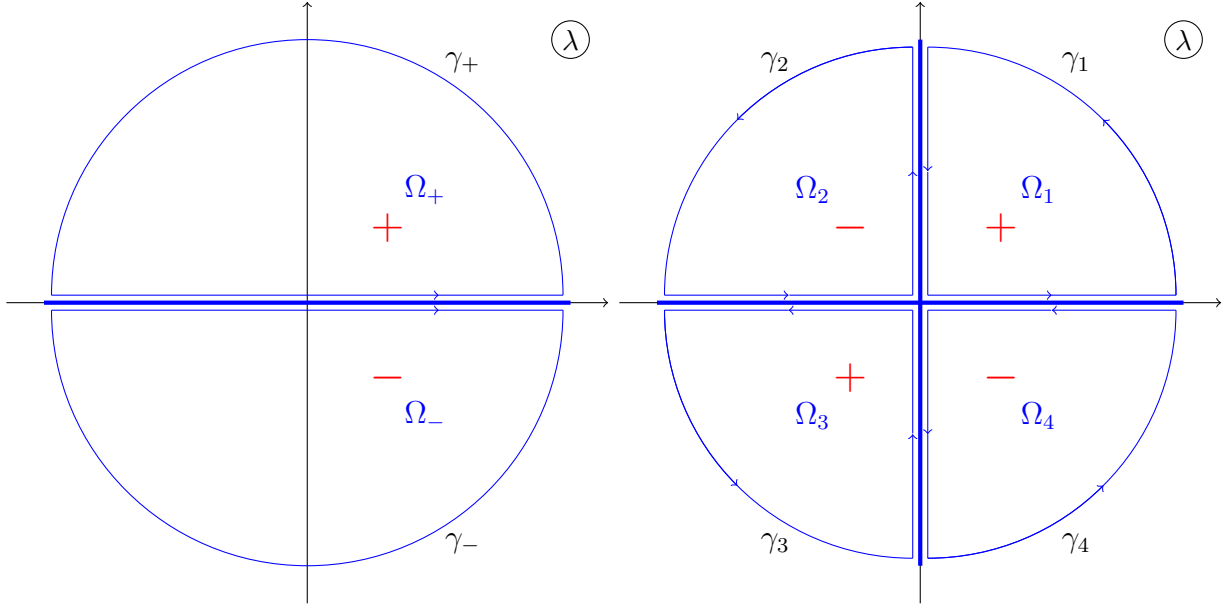
\begin{figure}
\begin{center}
\begin{tikzpicture}

\draw [->,black](0,-4) -- (0,4);
\draw [->,black](-4,0) -- (4,0);
\draw [-,ultra thick, blue](-3.5,0) -- (3.5,0);
\draw [black] (3.5,3.5) node  {{\small $\lambda$}};
\draw [black] (3.5,3.5) circle (0.25);
\draw [black] (2,3.2) node  {{  $\gamma_+$}};
\draw [black] (2,-3.2) node  {{  $\gamma_-$}};

\draw [red] (1.0,1.0) node  {{  \Large $+$}};
\draw [blue] (1.5,1.5) node  {{  $\Omega_+$}};
\draw [blue] (1.5,-1.5) node  {{  $\Omega_-$}};
\draw [red] (1.0,-1.0) node  {{ \Large $-$}};

\draw [-, blue](3.4,0.1) arc (0:180:3.4);
\draw [-, blue,rotate=180](3.4,0.1) arc (0:180:3.4);
\draw [->,blue](-3.4,0.1) -- (1.75,0.1);
\draw [-,blue](1.75,0.1) -- (3.4,0.1);
\draw [->,blue](-3.4,-0.1) -- (1.75,-0.1);
\draw [-,blue](1.75,-0.1) -- (3.4,-0.1);

\end{tikzpicture}
\begin{tikzpicture}

\draw [->,black](0,-4) -- (0,4);
\draw [->,black](-4,0) -- (4,0);
\draw [-,ultra thick, blue](-3.5,0) -- (3.5,0);
\draw [-,ultra thick, blue](0,-3.5) -- (0,3.5);
\draw [black] (3.5,3.5) node  {{\small $\lambda$}};
\draw [black] (3.5,3.5) circle (0.25);
\draw [black] (2,3.2) node  {{  $\gamma_1$}};
\draw [black] (-2,3.2) node  {{  $\gamma_2$}};
\draw [black] (-2,-3.2) node  {{  $\gamma_3$}};
\draw [black] (2,-3.2) node  {{  $\gamma_4$}};

\draw [red] (1.0,1.0) node  {{  \Large $+$}};
\draw [blue] (1.5,1.5) node  {{  $\Omega_1$}};
\draw [red] (-1.0,1.0) node  {{ \Large $-$}};
\draw [blue] (-1.5,1.5) node  {{  $\Omega_2$}};
\draw [blue] (-1.5,-1.5) node  {{  $\Omega_3$}};
\draw [red] (-1.0,-1.0) node  {{ \Large $+$}};
\draw [blue] (1.5,-1.5) node  {{  $\Omega_4$}};
\draw [red] (1.0,-1.0) node  {{ \Large $-$}};

\draw [-, blue](3.4,0.1) arc (0:90:3.3);
\draw [->, blue](3.4,0.1) arc (0:45:3.3);
\draw [->,blue](0.1,0.1) -- (1.75,0.1);
\draw [-,blue](1.75,0.1) -- (3.4,0.1);
\draw [->,blue](0.1,3.4) -- (0.1,1.75);
\draw [-,blue](0.1,1.74) -- (0.1,0.1);

\draw [-, blue,rotate=90](3.4,0.1) arc (0:90:3.3);
\draw [->, blue,rotate=90](3.4,0.1) arc (0:45:3.3);
\draw [->,blue,rotate=90](0.1,0.1) -- (1.75,0.1);
\draw [-,blue,rotate=90](1.75,0.1) -- (3.4,0.1);
\draw [->,blue,rotate=90](0.1,3.4) -- (0.1,1.75);
\draw [-,blue,rotate=90](0.1,1.74) -- (0.1,0.1);

\draw [-, blue,rotate=180](3.4,0.1) arc (0:90:3.3);
\draw [->, blue,rotate=180](3.4,0.1) arc (0:45:3.3);
\draw [->,blue,rotate=180](0.1,0.1) -- (1.75,0.1);
\draw [-,blue,rotate=180](1.75,0.1) -- (3.4,0.1);
\draw [->,blue,rotate=180](0.1,3.4) -- (0.1,1.75);
\draw [-,blue,rotate=180](0.1,1.74) -- (0.1,0.1);

\draw [-, blue,rotate=270](3.4,0.1) arc (0:90:3.3);
\draw [->, blue,rotate=270](3.4,0.1) arc (0:45:3.3);
\draw [->,blue,rotate=270](0.1,0.1) -- (1.75,0.1);
\draw [-,blue,rotate=270](1.75,0.1) -- (3.4,0.1);
\draw [->,blue,rotate=270](0.1,3.4) -- (0.1,1.75);
\draw [-,blue,rotate=270](0.1,1.74) -- (0.1,0.1);

\end{tikzpicture}
\end{center}
\caption{The continuous spectrum of a $L(\lambda)$ in thick blue, the analyticity regions $\Omega_s$ and the contours $\gamma_s$, $s=1,\dots, k$. Left
panel: $k=1$; $L(\lambda)$ is linear in $\lambda$ (\ref{eq:L}). Right panel: $k=2$: $L(\lambda)$ is quadratic in $\lambda$ (\ref{eq:UV2}).  }
\label{fig:1}
\end{figure}

\section{ Generic Kulish-Sklyanin type models }

The sewing function of our RHP problem depends on two additional parameters $x$ and $t$ in a special
way, which makes it convenient to analyze and solve special classes of nonlinear evolution equations (NLEE)
in two-dimensional space-time, known also as soliton equations. Here we also assume that the relevant
symmetric space is $SO(2r+1)/(SO(2r-1)\times SO(2))$.

\subsection{The compatibility condition $k=1$, $s=1$. }

In the simplest case  $k=1$ and $s=1$ the Lax pair:

\begin{equation}\label{eq:UV1}\begin{aligned}
U^{(1)}(x,t,\lambda) &= U_1(x,t)  -\lambda J, \qquad V^{(1)}(x,t,\lambda) = i Q_{1,x}+ V_2(x,t) +\lambda V_1(x,t) -\lambda^2 J,
\end{aligned}\end{equation}
where
\begin{equation}\label{eq:UV2}\begin{aligned}
U_1(x,t) &= [J,Q_1(x,t)] = Q(x,t) , &\quad V_1(x,t) &= Q(x,t) &\quad  V_2(x,t) &= \frac{1}{2}\ad_{Q_1} Q(x,t) ,
\end{aligned}\end{equation}
\begin{equation}\label{eq:UV4}\begin{split}
Q_1(x,t) = \left(\begin{array}{ccc} 0 & \vec{q}^T & 0 \\ -\vec{p} & 0 & s_0 \vec{q} \\
0 & -\vec{p}^Ts_0 & 0 \end{array}\right), \qquad   V_2(x,t) = \left(\begin{array}{ccc} (\vec{q},\vec{p}) & 0 & 0 \\
0 & s_0 \vec{q} \vec{p}^T s_0 - \vec{p}\vec{q}^T & 0 \\ 0 &  0 & -(\vec{q},\vec{p}) \end{array}\right).
\end{split}\end{equation}
As a result we get that this Lax pair leads to the well known Kulish-Sklyanin model whose integrability has been known since 1981
\cite{KuSkl}:
\begin{equation}\label{eq:KSm}\begin{split}
i \frac{\partial \vec{q}}{ \partial t} + \frac{\partial^2 \vec{q} }{ \partial x^2 } + 2 (\vec{q}\;^\dag, \vec{q})\vec{q}
- (\vec{q}^T s_0 \vec{q}) s_0 \vec{q}\;^* =0, \quad s_0 = \sum_{k=1}^{2r-1} (-1)^k E_{k,2r-k}.
\end{split}\end{equation}
where now $E_{kn}$ is a $2r-1 \times 2r-1$-matrix with $(E_{kn})_{pj}=\delta_{kp}\delta_{nj}$.
For applications of this model to Bose-Einstein condensates and detailed analysis for the inverse spectral
transform see \cite{VSG2}.

\subsection{The compatibility condition $k=1$, $s>1$. }

For $k=1$ and $s>1$ the Lax pair takes the form:
\begin{equation}\label{eq:UV1a}\begin{aligned}
U(x,t,\lambda) &= Q(x,t)  -\lambda J, \qquad V^{(1)}(x,t,\lambda) = iQ_{1,x}+ V_2(x,t) +\lambda Q(x,t) -\lambda^2 J,
\end{aligned}\end{equation}
where
\begin{equation}\label{eq:UV4'}\begin{split}
Q(x,t) = \left(\begin{array}{ccc} 0 & \q  & 0 \\ \p & 0 &  \tilde{\q} \\ 0 & \tilde{\p} & 0 \end{array}\right), \qquad
V_2(x,t) = \left(\begin{array}{ccc} - \q \p & 0 & 0 \\ 0 & \p \q - \tilde{\q} \tilde{\p} & 0 \\ 0 &  0 & \tilde{\q} \tilde{\p} \end{array}\right).
\end{split}\end{equation}
The matrix $S_0$ from orthogonality condition (\ref{eq:ort}) in this case equals $\left(\begin{array}{ccc}0 & 0 & \s_1 \\ 0 & \s_2 & 0 \\ \s_1^{-1} & 0 & 0
 \end{array}\right)$. The blocks $\s_k$, $k=1, 2$ are easily determined from (\ref{eq:ort}) and satisfy $s_2^2=\openone$, $s_2^{-1} =\s_2$. Then
$\tilde{\q}= -\s_2 \q^T \s_1$ and $\tilde{\p}= -\s_2 \p^T \s_1^{-1}$.
As a result we get the generic Kulish-Sklyanin model \cite{KuSkl}:
\begin{equation}\label{eq:KSm2}\begin{split}
i \frac{\partial \q }{ \partial t} + \frac{\partial^2 \q }{ \partial x^2 } + 2 \q \p \q - \q \tilde{\q} \tilde{\p} =0, \qquad
i \frac{\partial \p }{ \partial t} - \frac{\partial^2 \p }{ \partial x^2 } - 2 \p \q \p +  \tilde{\q} \tilde{\p} \p =0,
\end{split}\end{equation}
After the additional reduction $\p = \q^\dag$ we get:
\begin{equation}\label{eq:KSm1}\begin{split}
i \frac{\partial \q }{ \partial t} + \frac{\partial^2 \q }{ \partial x^2 } + 2 \q \q^\dag \q - \q \tilde{\q} \tilde{\q}^\dag =0.
\end{split}\end{equation}

\subsection{The compatibility condition $k=2$ }

Now $U^{(2)}(x,t,\lambda)$ is quadratic in $\lambda$, $V^{(4)}(x,t,\lambda)$ is quartic in $\lambda$ and we keep $s=1$:
\begin{equation}\label{eq:UV3}\begin{aligned}
U^{(2)}(x,t,\lambda) &= U_2(x,t)  + \lambda Q(x,t) -\lambda^2 J, \\
 V^{(4)}(x,t,\lambda) &=  V_4(x,t) +\lambda V_3(x,t) + \lambda^2 V_2(x,t) + \lambda^3 -\lambda^4 J,
\end{aligned}\end{equation}
where $Q(x,t)$ is given by (\ref{eq:UV4'}) and $U_2(x,t)=V_2(x,t)$ and in addition
\begin{equation}\begin{split}\label{eq:V3}
  V_3(x,t) = \left(\begin{array}{ccc} 0 & V_{3;2}^T & 0 \\ V_{3;1} & 0 & s_0V_{3;2} \\ 0 &  V^T_{3;1} s_0 & 0   \end{array}\right) ,
 \qquad \begin{split}
 V_{3;1} &= \vec{w} + \frac{1}{3} (\vec{p}^T s_0 \vec{p})  s_0\vec{q} - \frac{2}{3} (\vec{q},\vec{p}) \vec{p}, \\
 V_{3;2} &= s_0\vec{v} + \frac{1}{3} (\vec{q}^T s_0 \vec{q}) \vec{p} - \frac{2}{3} (\vec{q},\vec{p}) s_0\vec{q}.
\end{split}
\end{split}\end{equation}

\begin{equation}\label{eq:V4}\begin{split}
V_4&= \left(\begin{array}{ccc} V_{4;11} & 0 & 0 \\ 0 & V_{4;22}  & 0 \\ 0 & 0 & V_{4;33} \end{array}\right) , \qquad
Q_3(x,t) = \left(\begin{array}{ccc} 0 & \vec{v}^T & 0 \\ -\vec{w} & 0 & s_0 \vec{v} \\ 0 & -\vec{w}^Ts_0 & 0 \end{array}\right),\quad
\\ V_{4;11} &= -V_{4;33}
= i \left( (\vec{q}_x,\vec{p}) -  (\vec{q},\vec{p}_x)\right) + (\vec{q},\vec{p})^2 - \frac{1}{2}  (\vec{q}^Ts_0\vec{q}) (\vec{p}^Ts_0\vec{p}) , \\
V_{4;22} &=  i \left( \vec{p}_x \vec{q}^T -\vec{p} \vec{q}_x^T  + s_0( \vec{q}_x \vec{p}^T -\vec{q} \vec{p}_x^T) s_0\right) -
(\vec{q}.\vec{p}) \left( \vec{p} \vec{q}^T - s_0 \vec{q} \vec{p}^T s_0 \right).
\end{split}\end{equation}

The compatibility condition reads:
\begin{equation}\label{eq:LM2}\begin{aligned}
&\lambda^3: &\qquad i \frac{\partial Q}{ \partial x } +[U_2,Q]+[Q,V_2] &=[J, V_3], \\
&\lambda^2: &\qquad i \frac{\partial V_2}{ \partial x } +[U_2,V_2]+[Q,V_3] &=[J, V_4] ,
\end{aligned}\end{equation}
Since $U_2=V_2$ the first of the equations (\ref{eq:LM2}) gives:
\begin{equation}\label{eq:V3'}\begin{split}
V_3(x,t) = \ad_J^{-1} i \frac{\partial Q}{ \partial x }= i \left(\begin{array}{ccc} 0 & \vec{q}_x^T & 0 \\
-\vec{p}_x & 0 & s_0\vec{q}_x \\ 0 & - \vec{p}_x^T s_0 & 0  \end{array}\right) ,
\end{split}\end{equation}
which in components give:
\begin{equation}\label{eq:V3a}\begin{split}
\vec{v} &= i \frac{\partial \vec{q}}{ \partial x } + \frac{2}{3} (\vec{p},\vec{q})\vec{q} - \frac{1}{3} (\vec{q}^Ts_0\vec{q}) s_0\vec{p}, \qquad
\vec{w} =- i \frac{\partial \vec{p}}{ \partial x } + \frac{2}{3} (\vec{p},\vec{q})\vec{p} - \frac{1}{3} (\vec{p}^Ts_0\vec{p}) s_0\vec{q}.
\end{split}\end{equation}
The second equation in (\ref{eq:LM2}) is identically satisfied as a consequence of (\ref{eq:V3a}).

Finally the equations of motion:
\begin{equation}\label{eq:eq1}\begin{split}
\lambda: &\qquad i \frac{\partial V_3}{ \partial x }  - i \frac{\partial Q}{ \partial t } +[Q,V_4] +[U_2, V_3]=0, \qquad
\lambda^0: \qquad i \frac{\partial V_4}{ \partial x } - i \frac{\partial U_2}{ \partial t } +[U_2,V_4] =0.
\end{split}\end{equation}

Since in addition we put $\vec{p} =\epsilon \vec{q}\;^*$ we get:
\begin{equation}\label{eq:kusk1}\begin{split}
i \frac{\partial \vec{q}}{ \partial t} + \frac{\partial^2 \vec{q} }{ \partial x^2 }  + i\epsilon s_0 (\vec{q}s_0\vec{q})
 \frac{\partial \vec{q}\;^*}{ \partial x } + s_0 (\vec{q}, \vec{q}\;^*)  (\vec{q}s_0\vec{q}) s_0 \vec{q}\;^*
 - \left( \frac{1}{2} |(\vec{q}s_0\vec{q})|^2 + 2i \epsilon (\vec{q}, \vec{q}\;^*) \right) \vec{q} =0.
\end{split}\end{equation}
One can check that the second equation in (\ref{eq:eq1}) holds identically as a consequence of (\ref{eq:kusk1}).

\subsection{The gauge equivalent systems $k=2$}
 We fix up the gauge by requesting that $U(x,t,\lambda=0) =0$ and $V(x,t,\lambda=0) =0$. Thus
\begin{equation}\label{eq:lmt}\begin{split}
\tilde{L}\tilde{\psi}& \equiv i\frac{\partial \tilde{\psi}}{ \partial x } + (\lambda \tilde{Q}(x,t) - \lambda^2 J )\tilde{\psi}(x,t,\lambda)=0, \\
\tilde{M}\tilde{\psi}& \equiv i\frac{\partial \tilde{\psi}}{ \partial t } + (\lambda \tilde{V}_3(x,t) + \lambda^2 \tilde{V}_2(x,t)
+\lambda^3 \tilde{Q}(x,t)- \lambda^2 J )\tilde{\psi}(x,t,\lambda)=0,
\end{split}\end{equation}
Therefore $\tilde{\psi}(x,t,\lambda) = g_0(x,t) \psi (x,t,\lambda)$, i.e
\begin{equation}\label{eq:gauge0}\begin{split}
 i \frac{\partial g_0}{ \partial x } - g_0(x,t) U_2(x,t) =0, \quad i \frac{\partial g_0}{ \partial t } - g_0(x,t) V_4(x,t) =0,
\end{split}\end{equation}
so $g_0(x,t)$ is a block-diagonal function. Then  $\tilde{V}_k$ and $\tilde{Q}$  have
the same block structure of  $V_k$ and $Q$; we will denote their blocks by additional `tilde`.
Then  the NLEE get the form:
\begin{equation}\label{eq:nleet}\begin{split}
i \frac{\partial \vec{\tilde{q}}}{ \partial t} + \frac{\partial^2 \vec{\tilde{q}} }{ \partial x^2 } + i \frac{\partial }{ \partial x } \left(
2 \vec{\tilde{q}} (\vec{\tilde{q}\;}^T \vec{\tilde{p}}) - s_0 \vec{\tilde{p}} (\vec{\tilde{q}\;}^T \vec{\tilde{q}}) \right) =0 , \\
i \frac{\partial \vec{\tilde{p}}}{ \partial t} - \frac{\partial^2 \vec{\tilde{p}} }{ \partial x^2 } - i \frac{\partial }{ \partial x } \left(
2 \vec{\tilde{p}} (\vec{\tilde{q}\;}^T \vec{\tilde{p}}) - s_0 \vec{\tilde{q}} (\vec{\tilde{p}\;}^T \vec{\tilde{p}}) \right) =0,
\end{split}\end{equation}

\section{Generic KS type models  and their $\mathbb{Z}_n$-reductions}
Here we derive KS type models related to generic BD.I symmetric spaces $SO(2r+1)/(SO(2k)\times SO(2r-2k+1))$.
These are rather complicated systems of equations for $k(2r-2k+1)$ functions of $x$ and $t$. In the second
subsection we consider special case with $r=4$ and $k=3$ apply to it a special $\mathbb{Z}_6$-reduction.
The result is a new type of 2-component NLS.

\subsection{$\mathbb{Z}_6$-reduction of KS models for $SO(9)/(SO(6)\times SO(3))$}

Consider $SO(9)/(SO(6)\times SO(3))$. The the subset of positive roots is split into
\begin{equation}\label{eq:Delt0-1}\begin{aligned}
\Delta_0^+ \equiv \{ e_1 \pm e_2, e_2 \pm e_3, e_1 \pm e_3,  e_4 \} , \qquad  \Delta_1^+ \equiv \{ e_1 \pm e_4,  e_2 \pm e_4, e_3 \pm e_4,    e_1, e_2, e_3 \} ,
\end{aligned}\end{equation}
The reduction is given by the Weyl-group element $w_4 = S_{e_1-e_2} S_{e_2-e_3} S_{e_4}$, i.e. $ w_4^6 =\id$.
Obviously this Weyl group element leaves invariant $\Delta_0$ and $\Delta_1$
and the roots $\Delta_1$ are split into four orbits:
 \begin{equation}\label{eq:Or1}\begin{aligned}
\mathcal{O}_1^\pm \quad &\colon \quad \pm e_1-e_4 \to   \pm e_2 +e_4 \to  \pm e_3 - e_4 \to  \pm e_1 + e_4 \to  \pm e_2-e_4   \to  \pm e_3 +e_4  , \\
\mathcal{O}_3 \quad &\colon \quad e_1 \to    e_2 \to  e_3,  \qquad \mathcal{O}_4 \quad \colon \quad -e_1 \to   - e_2 \to  -e_3.
 \end{aligned}\end{equation}
So we have four orbits. After applying another $\mathbb{Z}_2$-reductions, i.e. $Q=Q^\dag$ one  may expect a 2-component NLS.
Realization of the automorphism $w_4$ is as follows $w_4(X) = A_1 X  A_1^{-1}$:
\begin{equation}\label{eq:w4}\begin{split}
A_1 = \left(\begin{array}{ccc} \a_1 & 0 & 0 \\ 0 & \a_2 & 0 \\ 0 & 0 & \a_3 \end{array}\right) , \quad
\a_1 = \left(\begin{array}{ccc} 0 & 1 & 0 \\ 0 & 0 & 1 \\ - 1 & 0 & 0  \end{array}\right) , \quad
\a_2 = \left(\begin{array}{ccc} 0 & 0 & 1 \\ 0 & -1 & 0 \\ 1 & 0 & 0  \end{array}\right), \quad
\a_3 = \left(\begin{array}{ccc} 0 & 0 & -1 \\ -1 & 0 & 0 \\ 0 & -1 & 0  \end{array}\right).
\end{split}\end{equation}
The orthogonality condition is given by (\ref{eq:ort}) with
$S_0 = \left(\begin{array}{ccc} 0 & 0 & s_1 \\ 0 & -s_1 & 0 \\ s_1 & 0 & 0  \end{array}\right) $, $
s_1 = \left(\begin{array}{ccc} 0 & 0 & 1 \\ 0 & -1 & 0 \\ 1 & 0 & 0  \end{array}\right) $
The corresponding potential of the Lax operator is as in (\ref{eq:UV4}) with
\begin{equation}\label{eq:qp-red}\begin{split}
  \q(x,t)  = \left(\begin{array}{ccc} q_1 & q_2 & -q_1 \\  -q_1 & -q_2 & q_1 \\  q_1 & q_2 & -q_1   \end{array}\right), \quad
 \p(x,t)  = \left(\begin{array}{ccc} p_1 & -p_1 & p_1 \\  p_2 & -p_2 & p_2 \\ - p_1 & p_1 & -p_1   \end{array}\right) .
\end{split}\end{equation}
The condition $Q=Q^\dag$ reduces to $p_1=q_1^*$ and $p_2=q_2^*$. Then the NLEE becomes
\begin{equation}\label{eq:nlee1}\begin{split}
i \frac{\partial q_1}{ \partial t } &+ \frac{\partial ^2 q_1}{ \partial x^2 } + 6 (|q_1|^2 + |q_2|^2)q_1(x,t) -3 q_2 ^2 q_1^*(x,t) =0, \\
i \frac{\partial q_2}{ \partial t } &+ \frac{\partial ^2 q_2}{ \partial x^2 } + 3 ( 4|q_1|^2 + |q_2|^2)q_2(x,t) -6 q_1 ^2 q_2^*(x,t)=0,
\end{split}\end{equation}
The corresponding Hamiltonian is:
\begin{equation}\label{eq:H}\begin{split}
H = 2 \left| \frac{\partial q_1}{ \partial x } \right|^2 + \left| \frac{\partial q_2}{ \partial x } \right|^2  - \frac{3}{2} \left( 2|q_1|^2 + |q_2|^2\right)^2
+ 3 \left( q_1 q_2^* - q_1^*q_2\right)^2.
\end{split}\end{equation}
The change of variables:  $v_1= \sqrt{6}q_1$,  $v_2=\sqrt{3}q_2$ leads to:
\begin{equation}\label{eq:NLS2}\begin{split}
i \frac{\partial v_1}{ \partial t } &+ \frac{\partial ^2 v_1}{ \partial x^2 } + (|v_1|^2 + 2|v_2|^2)v_1(x,t) - v_2 ^2 v_1^*(x,t) =0, \\
i \frac{\partial v_2}{ \partial t } &+ \frac{\partial ^2 v_2}{ \partial x^2 } +  ( 2|v_1|^2 + |v_2|^2)v_2(x,t) - v_1 ^2 v_2^*(x,t)=0, \\
\q(x,t)  &= \frac{1}{\sqrt{6}}\left(\begin{array}{ccc} v_1 & \sqrt{2}v_2 & -v_1 \\  -v_1 & - \sqrt{2}v_2 & v_1 \\  v_1 &  \sqrt{2}v_2 & -v_1   \end{array}\right),
\qquad \p = \q^\dag .
\end{split}\end{equation}
 It is easy to check that assuming canonical Poisson brackets  $\{ v_k(x), v_j^*(y)\} = \delta_{kj} \delta(x-y) $ for $v_j$,  the canonical
 Hamiltonian equations of motion:
 \begin{equation}\label{eq:Hvj}\begin{split}
  i \frac{\partial v_j}{ \partial t} = \frac{\delta H' }{\delta v_j^*} , \qquad j=1,2,
 \end{split}\end{equation}
 with
 \begin{equation}\label{eq:H'}\begin{split}
 H'= \left| \frac{\partial v_1}{ \partial x } \right|^2 + \left| \frac{\partial v_2}{ \partial x } \right|^2  - \frac{1}{2} \left( |v_1|^2 + |v_2|^2\right)^2
+ \frac{1}{2} \left( v_1 v_2^* - v_1^*v_2\right)^2,
 \end{split}\end{equation}
 coincides with (\ref{eq:NLS2}).

\section{Spectral properties of Lax operators}

\subsection{The case $SO(2r+1)/(SO(2r-1)\times SO(2))$}

Here we will outline the methods of solving the  direct and the inverse scattering problem (ISP) for $L$.
We will use  the Jost solutions  which are defined by, see \cite{VSG2} and the references therein
\begin{equation}
\lim_{x \to -\infty} \phi(x,t,\lambda) e^{  i \lambda^k J x }=\openone, \qquad  \lim_{x \to \infty}\psi(x,t,\lambda) e^{  i \lambda^k J x } = \openone
 \end{equation}
and the scattering matrix $T(\lambda,t)\equiv \psi^{-1}\phi(x,t,\lambda)$. Due to the special choice of $J$ and
to the fact that the Jost solutions and the scattering matrix take values in the group $SO(2r+1)$ we can use the following
block-matrix structure of $T(\lambda,t)$ and its inverse $\hat{T}(\lambda,t)$:
\begin{equation}\label{eq:25.1}
T(\lambda,t) = \left( \begin{array}{ccc} m_1^+ & -\vec{B}^-{}^T & c_1^- \\
\vec{b}^+ & {\bf T}_{22} & -s_0\vec{b}^- \\ c_1^+ & \vec{B}^+{}^Ts_0 & m_1^- \\ \end{array}\right), \qquad
\hat{T}(\lambda,t) = \left( \begin{array}{ccc} m_1^- & \vec{b}^-{}^T & c_1^- \\
-\vec{B}^+ & \hat{\bf T}_{22} & s_0\vec{B}^- \\ c_1^+ & -\vec{b}^+{}^Ts_0 & m_1^+ \\ \end{array}\right),
\end{equation}
where $\vec{b}^\pm (\lambda,t)$ and $\vec{B}^\pm (\lambda,t)$ are $2r-1$-component vectors,
${\bf T}_{22}(\lambda)$ and $\hat{\bf T}_{22}(\lambda)$ are  $2r-1 \times 2r-1$ blocks and $m_1^\pm
(\lambda)$,  $c_1^\pm (\lambda)$ are scalar functions satisfying
$c_1^\pm = 1/2 (\vec{b}^\pm \cdot s_0 \vec{b}^\pm) /m_1^\mp$.

Important tools for reducing the ISP to a Riemann-Hilbert problem (RHP) are the fundamental analytic solution (FAS) $\chi^{\pm}
(x,t,\lambda )$. Their construction is based on the generalized Gauss decomposition of $T(\lambda,t)$
\begin{equation}\label{eq:FAS_J}
\chi ^\pm(x,t,\lambda)= \phi (x,t,\lambda) S_{J}^{\pm}(t,\lambda ) = \psi (x,t,\lambda ) T_{J}^{\mp}(t,\lambda ) D_J^\pm (\lambda).
\end{equation}
Here $S_{J}^{\pm} $ and $T_{J}^{\pm} $ are upper- and lower-block-triangular matrices, while $D_J^\pm(\lambda)$ are
block-diagonal matrices with the same block structure as $T(\lambda,t)$ above. Skipping the details we give the explicit
expressions of the Gauss factors in terms of the matrix elements of $T(\lambda,t)$
\begin{equation}\label{eq:S_Jpm}\begin{aligned}
S_J^\pm (t,\lambda )&= \exp \left( \pm \sum_{\beta  \in\Delta_1^+ }^{}\tau^\pm_\beta  (\lambda,t) E_{\pm\beta }  \right), &\quad
T_J^\pm (t,\lambda ) &= \exp \left( \mp \sum_{\beta  \in\Delta_1^+ }^{}\rho^\pm_\beta  (\lambda,t) E_{\pm\beta } \right),  \\
D_J^+ &= \left( \begin{array}{ccc} m_1^+ & 0 & 0 \\ 0 & {\bf m}_2^+ & 0 \\
0 & 0 & 1/m_1^+ \end{array} \right), &\quad  D_J^- &= \left( \begin{array}{ccc} 1/m_1^- & 0 & 0 \\ 0 & {\bf m}_2^- & 0 \\
0 & 0 & m_1^- \end{array} \right),
\end{aligned}\end{equation}
where $\vec{b}^+ = (T_{1,2}, \dots , T_{1,2r})^T$
\begin{equation}\label{eq:rotau}\begin{aligned}
 \vec{\tau}^+ (\lambda ,t) &= \frac{\vec{b}^- }{m_1^+} , &\;   \vec{\tau}^- (\lambda ,t) &=  \frac{ \vec{B}^+}{m_1^-} , &\;  \m_2^+ &= \T_{22}+ \frac{\vec{b}^+ \vec{b}^{-,T} }{2m_1^+}
 =\hat{ \T}_{22}+ \frac{s_0\vec{b}^- \vec{b}^{+,T}s_0 }{2m_1^+} , \\
 \vec{\rho}^+ (\lambda ,t) &= \frac{\vec{b}^+ }{m_1^+} , &\;   \vec{\rho}^- (\lambda ,t) &= \frac{\vec{b}^- }{m_1^-} , &\;
\quad \m_2^- &= \hat{\T}_{22}+ \frac{\vec{B}^+ \vec{B}^{-,T} }{2m_1^-}= \hat{\T}_{22}+ \frac{s_0\vec{B}^- \vec{B}^{+,T}s_0 }{2m_1^-} , .
\end{aligned}\end{equation}

If $Q(x,t) $ evolves according to (\ref{eq:eq1}) then the scattering matrix and its elements satisfy the following linear evolution equations
\begin{equation}\label{eq:evol}
i\frac{d\vec{B}^{\pm}}{d t} \pm \lambda ^{2k} \vec{B}^{\pm}(t,\lambda ) =0,  \qquad
i\frac{d\vec{b}^{\pm}}{d t} \pm \lambda ^{2k} \vec{b}^{\pm}(t,\lambda ) =0,  \qquad  i\frac{d m_1^{\pm}}{d t}  =0,
 \qquad  i \frac{d{\bf m}_2^{\pm}}{d t}  =0,
\end{equation}
so the block-diagonal matrices $D^{\pm}(\lambda)$ are  generating functionals of the integrals of motion.
The fact that all $(2r-1)^2$ matrix elements of $m_2^\pm(\lambda)$ for $\lambda \in \bbbc_\pm$  generate integrals of motion reflects
the super-integrability of the model and is due to the degeneracy of the dispersion law determined by $\lambda^{2k} J$. We remind that
$D^\pm_J(\lambda)$ allow analytic extension for $\lambda\in \bbbc_\pm$ and that their zeroes and
 poles determine the discrete eigenvalues of $L$. We will use also another set of FAS:
\begin{equation}\label{eq:chi'}\begin{split}
\chi^{\prime,\pm} (x,t,\lambda) = \chi^\pm (x,t,\lambda) \hat{D}^\pm_J (\lambda).
\end{split}\end{equation}

The FAS for real $\lambda^k$ are linearly related
\begin{equation}\label{eq:rhp0}\begin{split}
\chi^+(x,t,\lambda) &=\chi^-(x,t,\lambda) G_J(\lambda,t), \qquad G_{0,J}(\lambda,t) =S^-_J(\lambda,t)S^+_J(\lambda,t) , \\
\chi^{\prime,+}(x,t,\lambda) &=\chi^{\prime,-}(x,t,\lambda) G'_J(\lambda,t), \qquad G'_{0,J}(\lambda,t) =T^+_J(\lambda,t)T^-_J(\lambda,t) .
\end{split}\end{equation}
One can rewrite eq. (\ref{eq:rhp0}) in an equivalent form for the FAS
$\xi^\pm(x,t,\lambda)=\chi^\pm (x,t,\lambda)e^{i\lambda Jx }$ and $\xi^{\prime,\pm}(x,t,\lambda)=\chi^{\prime,\pm} (x,t,\lambda)e^{i\lambda Jx }$
which satisfy also the relation
\begin{equation}\label{eq:rh-n}
\lim_{\lambda \to \infty} \xi^\pm(x,t,\lambda) = \openone, \qquad \lim_{\lambda \to \infty} \xi^{\prime,\pm}(x,t,\lambda) = \openone.
\end{equation}
Then for $\im \lambda^k=0$ these FAS satisfy
\begin{equation}\label{eq:rhp1}\begin{aligned}
\xi^+(x,t,\lambda) &=\xi^-(x,t,\lambda) G_J(x,\lambda,t), &\quad G_{J}(x,\lambda,t) &=e^{-i\lambda^k
Jx}G_{0,J}(\lambda,t)e^{i\lambda^k Jx} ,  \\
\xi^{\prime,+}(x,t,\lambda) &=\xi^{\prime, -}(x,t,\lambda) G'_J(x,\lambda,t), &\quad G'_{J}(x,\lambda,t) &=e^{-i\lambda^k
Jx}G'_{0,J}(\lambda,t)e^{i\lambda^k Jx} .
\end{aligned}\end{equation}
Obviously the sewing function $G_j(x,\lambda,t)$ is uniquely determined by
the Gauss factors of $T(\lambda,t)$. In view of eq. (\ref{eq:S_Jpm}) we arrive to the following
\begin{lemma}\label{lem:ms}
Let the potential $Q(x,t)$ be such that the Lax operator $L$ has no discrete eigenvalues. Then as minimal set of scattering data
which determines uniquely the scattering matrix $T(\lambda,t)$ and the corresponding potential $Q(x,t)$ one can consider either one
of the sets $\mathfrak{T}_i$, $i=1,2$
\begin{equation}\label{eq:T_i}
\mathfrak{T}_1 \equiv \{ \vec{\rho}^+(\lambda,t), \vec{\rho}^-(\lambda,t),  \quad \lambda^k \in \bbbr\},
\qquad \mathfrak{T}_2 \equiv \{ \vec{\tau}^+(\lambda,t), \vec{\tau}^-(\lambda,t),  \quad \lambda^k \in \bbbr\}.
\end{equation}\end{lemma}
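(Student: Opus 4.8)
The plan is to run the inverse-scattering/RHP machinery forward: show that each set $\mathfrak{T}_i$ fixes the sewing (jump) function of the Riemann--Hilbert problem, invoke uniqueness of the regular RHP to recover the FAS, read off the potential from the FAS, and finally recover the whole scattering matrix by solving the direct problem for that potential. First I would record the crucial structural fact, already contained in (\ref{eq:S_Jpm}): the upper/lower Gauss factors are $S_J^\pm=\exp\!\big(\pm\sum_{\beta\in\Delta_1^+}\tau^\pm_\beta E_{\pm\beta}\big)$ and $T_J^\pm=\exp\!\big(\mp\sum_{\beta\in\Delta_1^+}\rho^\pm_\beta E_{\pm\beta}\big)$, so that $S_J^\pm$ depends on $\vec{\tau}^\pm$ alone and $T_J^\pm$ on $\vec{\rho}^\pm$ alone. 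Hence the reduced sewing functions in (\ref{eq:rhp0}), namely $G_{0,J}=S_J^-S_J^+$ and $G'_{0,J}=T_J^+T_J^-$, are completely determined by $\mathfrak{T}_2$ and $\mathfrak{T}_1$ respectively, and conjugation by $e^{i\lambda^k Jx}$ as in (\ref{eq:rhp1}) produces the full $(x,\lambda,t)$-dependent jump without any extra input.

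Next I would use the no-discrete-eigenvalue hypothesis. Under it the block-diagonal factors $D_J^\pm(\lambda)$ have neither zeros nor poles in $\bbbc_\pm$, so (\ref{eq:rhp1}) with the canonical normalization (\ref{eq:rh-n}) is a \emph{regular} RHP whose solution $\xi^\pm$ (resp.\ $\xi^{\prime,\pm}$) is unique. Thus the chosen minimal set fixes the FAS uniquely. I would then reconstruct the potential directly from the FAS through the formula of the Zakharov--Shabat theorem, $U(x,t,\lambda)=\lambda^k J-(\lambda^k\xi J\hat{\xi})_+$ from (\ref{eq:UV}), reading $Q(x,t)$ off the relevant coefficient (equivalently, from the $\lambda^{-1}$ term $Q_1$ of the expansion (\ref{eq:xi}) via $Q=[J,Q_1]$).

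The step that closes the loop is the recovery of the \emph{full} scattering matrix. Once $Q$ is known, solving the direct scattering problem for this potential returns the entire $T(\lambda,t)$, including the block-diagonal data $m_1^\pm$ and $\m_2^\pm$ that are absent from $\mathfrak{T}_i$; hence $T$ too is uniquely determined by the reflection coefficients. This also accounts for minimality: the missing factors $m_1^\pm(\lambda)$ and $\m_2^\pm(\lambda)$ are analytic and non-vanishing in $\bbbc_\pm$ with $D_J^\pm\to\openone$ as $\lambda\to\infty$, so they are recovered from their boundary values on $\lambda^k\in\bbbr$ by a dispersion (Cauchy-integral) relation; the boundary values are in turn pinned down by the orthogonality constraint (\ref{eq:ort}) lifted to the group (which relates $\hat{T}$ to $T^T$ and underlies the symmetry between $\vec{b}^\pm$ and $\vec{B}^\pm$ in (\ref{eq:25.1})) together with $c_1^\pm=\tfrac12(\vec{b}^\pm\cdot s_0\vec{b}^\pm)/m_1^\mp$, all expressible through $\vec{\rho}^\pm$ or $\vec{\tau}^\pm$. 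No information independent of $\mathfrak{T}_i$ survives.

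The main obstacle I anticipate is precisely this third step: rigorously establishing that the block-diagonal factors---especially the non-abelian matrix block $\m_2^\pm$, whose $(2r-1)^2$ entries generate the (super-integrable) tower of conserved quantities---are genuinely slaved to the reflection coefficients rather than carrying independent data. The cleanest route is the indirect one above (reconstruct $Q$ first via the regular RHP, then solve the direct problem), which avoids inverting a matrix-valued dispersion relation for $\m_2^\pm$ by hand; the direct route would instead require careful handling of the paraconjugation identities on $\lambda^k\in\bbbr$ and of the logarithmic dispersion integral for a factor that is unimodular only up to the $SO(2r+1)$ orthogonality relation.
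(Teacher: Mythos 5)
Your proposal is correct, but it closes the loop in a different order than the paper does. The paper's proof has two independent halves: part (i) reconstructs the scattering matrix $T(\lambda,t)$ \emph{directly} from $\mathfrak{T}_i$, using the group constraint $T\in SO(2r+1)$ to derive the dispersion relation (\ref{eq:25.3}) for $1/(m_1^+m_1^-)$ in terms of $\vec{\rho}^{\pm}$, then recovering $m_1^\pm$ by Cauchy--Plemelj formulae from their analyticity in $\bbbc_\pm$, then $\vec{b}^\pm$ and $c_1^\pm$ algebraically, and finally ${\bf m}_2^\pm$ through an auxiliary matrix RHP whose details the paper defers; part (ii) then recovers $Q(x,t)$ exactly as you do, via the jump function, uniqueness of the regular canonically normalized RHP, and the asymptotic formula (\ref{eq:QQ}). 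You instead recover $Q$ first and obtain $T$ as a by-product by solving the direct problem for the reconstructed potential, which makes the troublesome non-abelian block ${\bf m}_2^\pm$ (correctly identified by you as the main obstacle) entirely implicit --- a logically self-contained and arguably cleaner uniqueness argument, since it never needs the deferred matrix RHP. What the paper's route buys in exchange is explicitness on the scattering-data side: the formula (\ref{eq:25.3}) and the Cauchy-integral reconstruction of $m_1^\pm$ are not incidental, since $\ln m_1^+(\lambda)$ is the generating function of the principal series of integrals of motion used later in (\ref{eq:I1-3}), so having $m_1^\pm$ expressed directly through the reflection coefficients is of independent value for the Hamiltonian theory. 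Your last paragraph in fact sketches the paper's route as the ``direct'' alternative, so the two arguments are complementary rather than in conflict.
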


\begin{proof} i) From the fact that $T(\lambda,t)\in SO(2r+1)$ one can derive that
\begin{equation}\label{eq:25.3}
\frac{1}{m_1^+m_1^-} = 1 + (\vec{\rho^+},\vec{\rho^-}) + \frac{1}{4}
(\vec{\rho^+},s_0\vec{\rho^+}) (\vec{\rho^-}, s_0\vec{\rho^-})
\end{equation}
for $\lambda\in\bbbr$. Using the analyticity properties of $m_1^\pm $  we can recover them from eq. (\ref{eq:25.3})
using Cauchy-Plemelji formulae. Given $\mathfrak{T}_i$ and $m_1^\pm$ one easily recovers $\vec{b}^\pm(\lambda)$
and $c_1^\pm(\lambda)$. In order to recover ${\bf m}_2^\pm$ one again uses their analyticity properties,
 only now the problem reduces to a RHP for functions on $SO(2r+1)$. The details will be presented elsewhere.

ii) Given $\mathfrak{T}_i$ one uniquely recovers the sewing function $G_J(x,t,\lambda)$.
In order to recover the corresponding potential $Q(x,t)$ one can use the fact that the RHP (\ref{eq:rhp1})
with canonical normalization has unique regular solution $\chi^\pm(x,t,\lambda)$. Given  $\chi^\pm(x,t,\lambda)$ we recovers $Q(x,t)$ via:
\begin{equation}\label{eq:QQ}
 Q(x,t) = \lim_{\lambda\to\infty} \lambda \left( J - \chi^\pm J \widehat{\chi}^\pm(x,t,\lambda)\right) =
 \lim_{\lambda\to\infty} \lambda \left( J - \chi^{\prime,\pm} J \widehat{\chi}^{\prime,\pm}(x,t,\lambda)\right)..
\end{equation}
which is well known.
\end{proof}
We impose also the standard reduction, namely assume that  $Q(x,t)=Q^\dag(x,t)$,
 or in components $p_k=q_k^*$.
As a consequence we have $\vec{\rho}^-(\lambda,t)=\vec{\rho}^{+,*}(\lambda,t)$
 and $\vec{\tau}^-(\lambda,t)=\vec{\tau}^{+,*}(\lambda,t)$.

\subsection{The case $SO(9)/(SO(3)\times SO(6))$}

Effects on the scattering data:
\begin{itemize}
  \item $T(\lambda)$ belongs to $SO(9)$, therefore $T^{-1} = S_0 T^T(\lambda) S_0$;
  \item $T(\lambda)$ is unitary matrix  $T^\dag (\lambda^*) = T^{-1}(\lambda)$;
  \item $T(\lambda)$ is invariant with respect to the automorphism $A_1$
\end{itemize}
We  parametrize $T(t,\lambda)$ using the same block-matrix structure as for $Q(x,t)$ and $J$ (\ref{eq:UV4'}):
\begin{equation}\label{eq:T}\begin{split}
T(\lambda) =\left(\begin{array} {ccc} \m^+ & -\b^- & \c^- \\ \b^+ & \T_{22} & -\B^- \\ \c^+ & \B^+ & \m^-  \end{array}\right) , \qquad
T^{-1}(\lambda)=\left(\begin{array} {ccc} \s_1\m^{-,T} \s_1 &  \s_1\B^{-,T}\s_1  & \s_1 \c^{-, T} \s_1 \\
- \s_1\B^{+,T} \s_1 & \s_1 \T_{22}^T \s_1 & \s_1 \b^{-,T} \s_1 \\  \s_1 \c^{+,T} \s_1 & - \s_1 \b^{+,T} \s_1 & \s_1 \m^{-,T} \s_1  \end{array}\right) ,
\end{split}\end{equation}
\begin{equation}\label{eq:T-1}\begin{split}
T^{-1} = S_0 T^T(\lambda) S_0,  \qquad  T^\dag (\lambda^*) = T^{-1}(\lambda), \qquad T (\lambda) = A_1 T(\lambda) A_1^{-1} ,
\end{split}\end{equation}
i.e.
\begin{equation}\label{eq:tt}\begin{aligned}
\m^{+,\dag }(\lambda^*) & = \s_1 \m^{-,T}(\lambda) \s_1, &\quad \c^{+,\dag }(\lambda^*) & = \s_1 \c^{-,T}(\lambda) \s_1, \\
\b^{\pm,\dag }(\lambda^*) & = \s_1 \B^{\mp ,T}(\lambda) \s_1, &\quad \B^{\pm,\dag }(\lambda^*) & = \s_1 \b^{\mp ,T}(\lambda) \s_1 .
\end{aligned}\end{equation}
and
\begin{equation}\label{eq:Tred}\begin{aligned}
\m^+ & = \a_1 \m^+ \a_1^{-1}, &\quad  \b^- & = \a_1 \b^- \a_2^{-1}, &\quad  \c^- & = \a_1 \c^- \a_3^{-1}, \\
\b^+ & = \a_2 \b^+ \a_1^{-1}, &\quad  \T_{22} & = \a_2 \T_{22} \a_2^{-1}, &\quad  \B^- & = \a_2 \B^- \a_3^{-1}, \\
\c^+ & = \a_3 \c^+ \a_1^{-1}, &\quad  \B^+ & = \a_3 \B^+ \a_2^{-1}, &\quad  \m^- & = \a_3 \m^- \a_3^{-1} .
\end{aligned}\end{equation}


\section{Integrability properties of Kulish-Sklyanin models}

\subsection{The Wronskian relations and minimal sets of scattering data}\label{ch:WR}


The analysis of the mapping $\mathcal{ F} \colon \mathcal{ M} \to
\mathcal{ T} $ between the class of allowed potentials $\mathcal{M} $ and the scattering data of $L $ starts with the Wronskian
relations, see \cite{DJK,CalDeg} for $sl(2) $-case and \cite{VSG2,ContMat}. For higher rank algebras and symmetric spaces
(the block-matrix case) one should use \cite{IP2,ContMat}.

These ideas  will be worked out for $L$ (\ref{eq:L}) with $s=1$. With it one can associate:
\begin{eqnarray}\label{eqB:I.3}
&& i {d \hat{\psi} \over d x }- \hat{\psi} (x,t,\lambda )U(x,t,\lambda ) =0, \qquad U(x,\lambda )=Q(x)-\lambda J, \\
\label{eqB:I.4} && i {d \delta \psi \over d x }  + \delta U(x,t,\lambda ) \psi (x,t,\lambda ) + U(x,t,\lambda ) \delta \psi (x,t,\lambda ) =0\\
\label{eqB:I.5} && i {d \dot{\psi} \over d x }  - \lambda J \psi (x,t,\lambda ) + U(x,t,\lambda ) \dot{\psi} (x,t,\lambda ) =0
\end{eqnarray}
where $\delta \psi  $ corresponds to a given variation $\delta Q(x,t) $ of the potential, while by dot we denote the derivative
with respect to the spectral parameter. We start with the identity:
\begin{eqnarray}\label{eqB:wr.1}
\left. \left( \hat{\chi }J \chi (x,\lambda ) - J \right) \right|_{x=-\infty }^{\infty } &=& i \int_{-\infty }^{\infty } d
x\, \hat{\chi }[J,Q(x)]\chi (x,\lambda ),
\end{eqnarray}
where $\chi (x,\lambda ) $ can be any fundamental solution of $L$. For convenience we use both
choices  $\chi (x,\lambda ) =\chi^\pm (x,\lambda ) $   and $\chi (x,\lambda ) =\chi^{\prime,\pm} (x,\lambda ) $.

The left hand side of (\ref{eqB:wr.1}) can be calculated explicitly by using the asymptotics of $\chi ^\pm(x,\lambda ) $
for $x\to \pm \infty  $, (\ref{eq:FAS_J}).  It would be expressed by the matrix elements of the scattering matrix $T(\lambda ) $,
i.e., by the scattering data of $L $.

Indeed, let us  multiply both sides of eq. (\ref{eqB:wr.1}) by  $E_{\pm \beta} $, $\beta\in \Delta_{1}^{+}$ and take the Killing form.
In the right hand side of this equation we can use the invariance properties of the trace and rewrite it in the form:
\begin{equation}\label{eqB:wr.1ab}\begin{split}
\langle \left. \left( \hat{\chi }^\pm J \chi^\pm  (x,\lambda ) - J \right)E_{\beta}\rangle\right|_{x=-\infty }^{\infty } &= i
\int_{-\infty }^{\infty } d x\, \langle \left( [J, Q(x)] \e_{\beta}^\pm (x,\lambda )\right) \rangle ,\\
\langle \left. \left( \hat{\chi }^{\prime,\pm} J \chi^{\prime,\pm} (x,\lambda ) - J\right)E_{\beta}\rangle\right|_{x=-\infty }^{\infty }&= i
\int_{-\infty }^{\infty } d x\, \langle \left( [J, Q(x)] \e_{\beta}^{\prime,\pm} (x,\lambda )\right) \rangle ,
\end{split}\end{equation}
where
\begin{equation}\label{eq:e-ab}\begin{aligned}
e_{\beta}^\pm (x,\lambda )& =\chi^\pm E_{\beta}\hat{\chi }^\pm (x,\lambda ) , &\quad \e_{\beta}^\pm (x,\lambda ) &=\pi_{0J}(\chi^\pm
E_{\beta}\hat{\chi }^\pm  (x,\lambda ))  ,\\
e_{\beta}^{\prime,\pm}  (x,\lambda )& =\chi^{\prime,\pm} E_{\beta}\hat{\chi }^{\prime,\pm}  (x,\lambda ) , &\quad
\e_{\beta}^{\prime,\pm} (x,\lambda ) &=\pi_{0J}(\chi^{\prime,\pm} E_{\beta}\hat{\chi }^{\prime,\pm}  (x,\lambda ))  ,
\end{aligned}\end{equation}
are the natural generalization of the `squared solutions' introduced first for the $sl(2) $-case by Kaup  \cite{DJK} and
generalized to any simple Lie algebra in \cite{IP2,VSG2,ContMat}. By
$P_{0J}$ we have denoted the projector $\pi_{0J}=\ad_J^{-1}\ad_J$ on the block-off-diagonal part of the corresponding matrix-valued
function.

The right hand sides of eq. (\ref{eq:e-ab}) can be written down with the skew--scalar product:
\begin{equation}\label{skew}
\biglb X, Y \bigrb = \int_{-\infty}^\infty d x \langle  X(x), [J, Y(x) ] \rangle ,
\end{equation}
where $\langle X,Y\rangle$ is the Killing form; in what follows we assume that the Cartan-Weyl generators satisfy $\langle E_\alpha,
E_{-\beta} \rangle =\delta_{\alpha,\beta}$ and $\langle H_j, H_k \rangle =\delta_{jk}$. The  product is skew-symmetric $\biglb X, Y
\bigrb = - \biglb Y, X \bigrb$ and is non-degenerate on the space of allowed potentials $ \mathcal{M} $. Thus we find
\begin{equation}\label{eq:rho-ea} \begin{split}
\rho_{\beta}^{\pm}=-i \biglb Q(x),\e^{\prime,\pm}_{\pm \beta} \bigrb, \qquad \tau_{\beta}^{\pm}=-i \biglb Q(x),\e^{\pm}_{\mp\beta} \bigrb,
\end{split}\end{equation}

Thus the mappings  $\mathfrak{F}:Q(x,t) \to \mathfrak{T}_i$ can be viewed as generalized Fourier transform in which
$\e_{\beta}^{\pm}(x,\lambda ) $ and $\e_{\beta}^{\prime,\pm}(x,\lambda ) $ can be viewed as
generalizations of the standard exponentials. In what follows we will show that the same `squared solutions' appear in the analysis
of the mapping between the variations $\delta Q(x,t)$ and $\delta \mathfrak{T}_i$.

The second type of Wronskian relations  relate the variation of the potential $\delta Q(x) $ to the
corresponding variations of the scattering data. To this purpose we start with the identity:
\begin{equation}\label{}
\left.  \hat{\chi }^\pm \delta \chi^\pm (x,\lambda ) \right|_{x=-\infty }^{\infty } = i \int_{-\infty }^{\infty } d x\,
\hat{\chi } \delta Q(x)\chi (x,\lambda ),
\end{equation}
which follows from eqs. (\ref{eqB:I.3}) and (\ref{eqB:I.5}{{61}). We apply ideas similar to the ones above and get:
\begin{equation}\label{eq:drho-ea}\begin{aligned}
\delta\rho_{\beta}^{\pm}=\mp i \biglb \ad_J^{-1}\delta Q(x),\e^{\prime,\pm}_{\pm\beta} \bigrb, \qquad
\delta\tau_{\beta}^{\pm}= \pm i \biglb \ad_J^{-1}\delta Q(x),\e^{\pm}_{\mp \beta} \bigrb,
\end{aligned}\end{equation}
where $\beta \in \Delta_1^+$.
These relations are basic in the analysis of the related NLEE and their Hamiltonian structures. Below we shall use them assuming
that the variation of $Q(x) $ is due to its time evolution, and consider variations of the type:
\begin{equation}\label{eqB:wr.23}
\delta  Q(x,t) = Q_t \delta t + \mathcal{ O} ((\delta  t)^2).
\end{equation}
Keeping only the first order terms with respect to $\delta t $ we find:
\begin{equation}\label{eq:drho-ea'}\begin{aligned}
\frac{d\rho_{\beta}^{\pm}}{dt}= \mp i \biglb \ad_J^{-1} Q_t(x),\e^{\prime,\pm}_{\pm \beta} \bigrb, &\quad
\frac{d\tau_{\beta}^{\pm}}{dt}= \pm i \biglb \ad_J^{-1} Q_t(x),\e^{\pm}_{\mp \beta} \bigrb.
\end{aligned}\end{equation}


\subsection{The generalized Fourier transforms and the completeness of the `squared solutions' }\label{ch:GFT}

It is known that the `squared solutions' $\e_{\alpha} ^\pm (x, \lambda ) = \pi_{0J} \left(\chi^\pm E_\alpha \chi^\pm(x,t,\lambda)\right) $,
form complete set of functions in the space of allowed potentials $q(x)$, see
\cite{VSG2,ContMat}.
For brevity and simplicity below we assume that $L$ has no discrete eigenvalues.
Let us introduce the sets of `squared solutions'
\begin{equation}\label{eq:Phi}\begin{split}
\{\bPsi \} &\equiv \left\{ \e ^+_{-\alpha}(x,\lambda), \quad \e ^-_{\alpha}(x,\lambda), \quad \lambda \in \bbbr
, \quad \alpha\in\Delta_1^+ \right\}, \\
\{\bPhi \} &\equiv \left\{ \e ^+_{\alpha}(x,\lambda), \quad \e ^-_{-\alpha}(x,\lambda), \quad \lambda \in \bbbr
, \quad \alpha\in\Delta_1^+ \right\}.
\end{split}\end{equation}

`\begin{theorem}[see \cite{VSG2,ContMat}]\label{t2.1}
The sets $\{\bPsi \} $  and $\{\bPhi \} $ form complete sets of functions in $\mathcal{M}_J$. The corresponding completeness
relation has the form:
\begin{equation}\label{eq:5.23}
\begin{split}
\delta(x-y)\Pi_{0J} &= {1\over \pi} \int_{-\infty}^\infty d \lambda (G_1^+(x,y,\lambda) - G_1^-(x,y,\lambda) ),
\end{split}\end{equation}
where
\begin{equation}\label{eq:5.23'}
\begin{split}
\Pi_{0J} &=\sum_{\alpha \in \Delta_{1}^{+}}( E_{\alpha}\otimes E_{-\alpha} - E_{-\alpha}\otimes E_{\alpha}) , \\
G_1^\pm (x,y,\lambda) &= \sum_{\alpha \in \Delta_{1}^{+}} \e_{\pm \alpha}^\pm (x,\lambda)\otimes \e_{\mp \alpha}^+(y,\lambda),
\end{split}\end{equation}
\end{theorem}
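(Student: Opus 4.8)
The plan is to recast the completeness relation (\ref{eq:5.23}) as a contour-integral identity for a Green function built out of tensor products of the squared solutions, following the generalized Fourier transform philosophy of \cite{VSG2,ContMat}. First I would assemble from the blocks $\e^\pm_{\pm\alpha}(x,\lambda)$ a kernel $\mathcal{G}(x,y,\lambda)$ that is piecewise defined through step functions $\theta(\pm(x-y))$. The purpose of this splitting is boundedness: the Jost asymptotics (\ref{eq:FAS_J}) endow each squared solution with exponential factors $e^{\pm i\lambda^k J x}$, and the step functions are chosen so that in each half-plane $\lambda^k\in\bbbc_\pm$ these factors decay as $|x-y|\to\infty$. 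Because $\chi^\pm(x,t,\lambda)$ are analytic in the regions $\Omega_\pm$ of Figure \ref{fig:1}, the corresponding pieces $\mathcal{G}^\pm$ are analytic for $\lambda^k\in\bbbc_\pm$, and their discontinuity across the real axis reproduces exactly the combination $G_1^+(x,y,\lambda)-G_1^-(x,y,\lambda)$ entering (\ref{eq:5.23}).

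Next I would integrate $\mathcal{G}^\pm$ over the boundary $\partial\Omega_\pm$, i.e. along $\lambda^k\in\bbbr$ closed by the large arcs $\gamma_\pm$. Since we assume that $L$ has no discrete eigenvalues, the Gauss factors $m_1^\pm$ and ${\bf m}_2^\pm$ from (\ref{eq:S_Jpm}) have no zeros in $\Omega_\pm$, so $\mathcal{G}^\pm$ has no poles inside the contour and the residue sum in Cauchy's theorem is empty. Thus the contour integral equals the integral along the real axis, which by the previous paragraph is $\frac{1}{\pi}\int_{-\infty}^\infty d\lambda\,(G_1^+-G_1^-)$, and separately equals the contribution of the arcs at $|\lambda|\to\infty$.

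The decisive step, and the main obstacle, is the asymptotic analysis on the arcs. Using the canonical normalization (\ref{eq:norm}) and the expansion (\ref{eq:xi})--(\ref{eq:xias}) of $\xi^\pm$, one has $\chi^\pm E_\alpha\hat\chi^\pm\to e^{-i\lambda^k Jx}E_\alpha e^{i\lambda^k Jx}$ to leading order as $\lambda\to\infty$, so on the arc the integrand collapses to oscillatory exponentials $e^{i\lambda^k\alpha(J)(x-y)}$ (with $\alpha(J)=1$ on $\Delta_1^+$) tensored with the fixed Cartan-Weyl bivectors. Letting the radius grow and invoking the Fourier representation of the delta function, these exponentials integrate to $\delta(x-y)$, while the root-space sum reorganizes into $\sum_{\alpha\in\Delta_1^+}(E_\alpha\otimes E_{-\alpha}-E_{-\alpha}\otimes E_\alpha)=\Pi_{0J}$ and pins down the normalization $1/\pi$. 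The genuine work here is to justify the interchange of the $\lambda\to\infty$ limit with the integration and to show that the subleading terms of (\ref{eq:xias}) contribute nothing to the arc; this is the standard but delicate analytic core of the argument.

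Finally, I would verify that the identity is indeed supported on $\mathcal{M}_J$. Each squared solution is, by construction, the projection $\e^\pm_{\pm\alpha}=\pi_{0J}(\chi^\pm E_{\pm\alpha}\hat\chi^\pm)$ onto the block-off-diagonal part, so both $G_1^\pm$ and the limiting projector take values in the co-adjoint orbit $\mathcal{O}_J$ that constitutes the phase space. Hence (\ref{eq:5.23}) is the resolution of the identity operator $\delta(x-y)\Pi_{0J}$ on $\mathcal{M}_J$, which is precisely the statement that $\{\bPsi\}$ and $\{\bPhi\}$ are complete there.
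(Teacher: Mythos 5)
Your overall strategy does coincide with the paper's: both apply the contour integration method to a Green function spliced together by step functions from tensor products of squared solutions, and evaluate the integral (\ref{eq:CIM}) in two ways --- by residues (an empty sum here, since $L$ is assumed to have no discrete spectrum) and by direct integration along the contours, with the large arcs producing the $\delta(x-y)\Pi_{0J}$ term from the asymptotics of the FAS. However, there is a genuine gap at the step where you claim that ``the discontinuity across the real axis reproduces exactly the combination $G_1^+-G_1^-$'' and attribute this to the analyticity of $\chi^\pm$ in $\Omega_\pm$. Analyticity gives no such thing: whatever splicing you choose, the jump of a kernel of the form $G_1^\pm\,\theta(y-x)-\tilde{G}_2^\pm\,\theta(x-y)$ is $(G_1^+-G_1^-)\,\theta(y-x)-(\tilde{G}_2^+-\tilde{G}_2^-)\,\theta(x-y)$, and to strip off the step functions one needs the separate identity $\tilde{G}_2^+-\tilde{G}_2^-=-(G_1^+-G_1^-)$ on $\mathbb{R}$.

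That identity is algebraic, not analytic, and it is exactly the reason the paper's Green function (\ref{eq:G}) contains, besides the $\Delta_1^-$ blocks, also the squared solutions for $\alpha\in\Delta_0$ and the Cartan terms $\h_j^\pm\otimes\h_j^\pm$. Indeed, on the real axis $\chi^+(x,\lambda)=\chi^-(x,\lambda)G_{0,J}(\lambda)$ with an $x$-independent sewing matrix, and conjugation by $G_{0,J}\otimes G_{0,J}$ preserves the Casimir element $\sum_{\alpha\in\Delta}E_\alpha\otimes E_{-\alpha}+\sum_j H_j\otimes H_j$; consequently the full sum $G_1^\pm+G_2^\pm$ has no jump across $\mathbb{R}$, which yields $G_2^+-G_2^-=-(G_1^+-G_1^-)$ and hence $G^+-G^-=G_1^+-G_1^-$ for all $x,y$. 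This argument requires summing over a complete Cartan--Weyl basis. Your kernel, assembled only from the blocks $\e^\pm_{\pm\alpha}$ with $\alpha\in\Delta_1$, omits precisely the $\Delta_0$ and Cartan contributions, so its jump retains the step functions and the relation (\ref{eq:5.23}) does not follow from your contour identity. Once these blocks are restored and the Casimir-invariance argument is inserted, the rest of your outline --- no residues in the absence of discrete eigenvalues, arc asymptotics producing $\theta(y-x)\delta(x-y)$ and $\theta(x-y)\delta(x-y)$ terms that recombine into the antisymmetric projector $\Pi_{0J}$ with the factor $1/\pi$ --- is sound and agrees with the proof sketched in the paper and carried out in \cite{VSG2,ContMat}.
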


\begin{proof}[Idea of the proof]
Apply the contour integration method to the Green function
\begin{equation}\label{eq:G}\begin{split}
G^\pm(x,y,\lambda) &= G_1^\pm(x,y,\lambda) \theta(y-x) - G_2^\pm(x,y,\lambda) \theta(x-y), \\
G_2^\pm (x,y,\lambda) &= \sum_{\alpha \in \Delta_0\cup \Delta_{1}^{-}} \e_{\pm\alpha}^-(x,\lambda)\otimes
\e_{\mp\alpha}^-(y,\lambda) + \sum_{j=1}^r \h_{j}^\pm (x,\lambda)\otimes \h_{j}^\pm(y,\lambda),
\end{split}\end{equation}
where $\h_{j}^\pm (x,\lambda) = \pi_{0J} \left( \chi^\pm(x,\lambda) H_j \hat{\chi}^\pm (x,\lambda) \right),$ and calculate  the integral
\begin{equation}\label{eq:CIM}
\mathcal{J}_G(x,y) = \frac{1}{2\pi i} \oint_{\gamma_+} d\lambda \; G^+(x,y,\lambda) - \frac{1}{2\pi i} \oint_{\gamma_-} d\lambda \;
G^-(x,y,\lambda),
\end{equation}
in two ways: i) via the Cauchy residue theorem and ii) integrating along the contours, see \cite{VSG2,ContMat}.

\end{proof}

  Skipping the details we write down the expansions of $q(x) $ and $\ad_J^{-1}\delta q(x)$ assuming $L$ has no discrete spectrum:
\begin{equation}\label{eq:49.4}
Q(x) = -{i\over \pi } \int_{-\infty }^{\infty } d \lambda \sum_{\alpha\in\Delta_1^+} \left( \tau^+_{\alpha}(\lambda )
\e_{\alpha} ^+(x, \lambda ) -\tau_{\alpha}^-(\lambda ) \e_{-\alpha} ^-(x, \lambda ) \right) ,
\end{equation}
\begin{equation}\label{eq:50.6}
\ad_J^{-1}\delta Q(x) = {i \over \pi } \int_{-\infty }^{\infty } d \lambda \sum_{\alpha\in\Delta_1^+} \left(
\delta\tau^+_{\alpha}(\lambda ) \e_{\alpha} ^+(x, \lambda ) + \delta \tau_{\alpha}^-(\lambda )  \e_{-\alpha} ^-(x, \lambda )\right) .
\end{equation}
These expansions can be viewed as tool  to establish the one-to-one correspondence between $q(x) $ (resp. $\ad_J^{-1}\delta q$ and each of the minimal
sets of scattering data $\mathcal{T}_i $ (resp. $\delta\mathcal{T}_i $), $i=1,2$.
To complete the analogy between the standard Fourier transform and the expansions over the `squared solutions' we need the
generating operators $\Lambda _\pm $:
\begin{equation}\label{eq:**6}
\Lambda _\pm X(x) \equiv \ad_{J}^{-1} \left( i {d X \over d x} + i \left[ q(x), \int_{\pm\infty }^{x} d y\, [q(y), X(y)]\right] \right).
\end{equation}
for which the `squared solutions' are eigenfunctions:
\begin{equation}\label{eq:**0}\begin{split}
(\Lambda _+-\lambda )\e_{-\alpha}^{+} (x,\lambda ) = 0, \qquad (\Lambda _+-\lambda )\e_{\alpha}^{-} (x,\lambda ) = 0, \\
(\Lambda _--\lambda )\e_{\alpha}^{+} (x,\lambda ) = 0, \qquad (\Lambda _--\lambda )\e_{-\alpha}^{-} (x,\lambda ) = 0.
\end{split}\end{equation}

\subsection{Fundamental properties of the KS type equations}

The expansions (\ref{eq:49.4}), (\ref{eq:50.6}) and the explicit form of $\Lambda_\pm$ and eq. (\ref{eq:**0}) are basic for deriving the fundamental
properties of all MNLS type equations related to the Lax operator $L$. Each of these NLEE is determined by its dispersion law which we choose
to be of the form $F(\lambda) =f(\lambda) J$, where $f(\lambda)$ is polynomial in $\lambda$. The corresponding NLEE becomes:
\begin{equation}\label{nlee}
i\ad_J^{-1} \frac{\partial Q}{ \partial t } + f(\Lambda_\pm) Q(x,t)  = 0.
\end{equation}

\begin{theorem}\label{t1}
The NLEE \eqref{nlee} are equivalent to: i) the equations (\ref{eq:evol}) and ii) the following evolution equations for
the generalized Gauss factors of $T(\lambda)$:
\begin{equation}\label{ds}
i {dS^+_J \over dt} + [F(\lambda), S^+_J] =  0, \qquad i {dT^-_J \over dt} + [F(\lambda), T^-_J] =  0, \qquad {dD^+_J \over dt}=0.
\end{equation}
\end{theorem}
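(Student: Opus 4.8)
The strategy is to establish the two equivalences $\text{NLEE}\Leftrightarrow\text{(i)}$ and $\text{(i)}\Leftrightarrow\text{(ii)}$ separately, using the generalized Fourier interpretation assembled in Subsection \ref{ch:GFT}. For $\text{NLEE}\Leftrightarrow\text{(i)}$ I would start from the Wronskian relations (\ref{eq:drho-ea'}) and substitute the NLEE in the form $\ad_J^{-1}Q_t = i f(\Lambda_\pm)Q$. Carrying $f(\Lambda_\pm)$ off $Q$ and onto the squared solution — this is the one genuinely analytic step, and it rests on the self-adjointness of $\Lambda_\pm$ with respect to the skew-scalar product $\biglb\cdot,\cdot\bigrb$ — and then using the eigenfunction property (\ref{eq:**0}) to replace $f(\Lambda_\pm)\e_{\mp\beta}^\pm$ by $f(\lambda)\e_{\mp\beta}^\pm$, the right-hand side collapses to $f(\lambda)\biglb Q,\e_{\mp\beta}^\pm\bigrb$, which by (\ref{eq:rho-ea}) is proportional to $f(\lambda)\tau_\beta^\pm$. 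This yields the linear laws $i\dot\tau_\beta^\pm \pm f(\lambda)\tau_\beta^\pm = 0$, and the parallel computation with the primed squared solutions gives the matching equations for $\rho_\beta^\pm$; the precise signs are pinned down by the conventions in (\ref{eq:rho-ea}) and (\ref{eq:**0}). Equivalently — and this is the route I would use for the converse implication — one feeds the linear laws into the expansions (\ref{eq:49.4})--(\ref{eq:50.6}) and compares with $f(\Lambda_\pm)Q$ computed termwise from (\ref{eq:49.4}); uniqueness of the expansion coefficients, guaranteed by the completeness theorem (Theorem \ref{t2.1}), then returns (\ref{nlee}). For the Kulish-Sklyanin dispersion law $f(\lambda)=\lambda^{2k}$ these are exactly the off-diagonal equations of (\ref{eq:evol}).

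The conservation of the diagonal data, $\dot m_1^\pm = 0$ and $\dot{\bf m}_2^\pm = 0$, is not seen by the block-off-diagonal squared solutions and must be argued separately. Here I would use that the NLEE is the zero-curvature condition of the Lax pair (\ref{eq:lm}) with $V$ governed by $F(\lambda)=f(\lambda)J$, whence the scattering matrix obeys $i\,dT/dt = [F(\lambda),T]=f(\lambda)[J,T]$. Reading this in the block form (\ref{eq:25.1}) and using that $J$ is block-scalar, so that $J_{aa}-J_{bb}=0$ on each of the three diagonal blocks, gives the conservation of $m_1^\pm$ and ${\bf m}_2^\pm$, while the off-diagonal entries reproduce the linear laws found above. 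This completes (i).

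For $\text{(i)}\Leftrightarrow\text{(ii)}$ I would differentiate the explicit Gauss factors (\ref{eq:S_Jpm}). Since we are in the case $s=1$, no sum of two roots of $\Delta_1^+$ is again a root, so the vectors $\{E_\beta:\beta\in\Delta_1^+\}$ span an abelian subalgebra and, writing $X=\sum_\beta\tau_\beta^+E_\beta$, one has $dS_J^+/dt = (dX/dt)\exp X$. Using $[J,E_\beta]=\beta(J)E_\beta=E_\beta$ one finds $\exp(-X)J\exp X = J+X$, hence $[F,S_J^+]=f(\lambda)[J,\exp X]=f(\lambda)X\exp X$; combining this with the linear law for $\tau_\beta^+$ delivers $i\,dS_J^+/dt+[F,S_J^+]=0$. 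The same computation with $E_\beta\to E_{-\beta}$ gives the equation for $T_J^-$, and $dD_J^+/dt=0$ is precisely the conservation of $m_1^+,{\bf m}_2^+$, consistent with $[F,D_J^+]=f(\lambda)[J,D_J^+]=0$ by block-diagonality. I expect the main obstacle to be twofold: the rigorous justification that $f(\Lambda_\pm)$ may be carried across the skew product and under the spectral integral, which is exactly what the completeness theorem underwrites, and the disciplined bookkeeping of signs and normalizations connecting $\tau_\beta^\pm,\rho_\beta^\pm$ to $\vec b^\pm,\vec B^\pm$ through (\ref{eq:rotau}) and the correct choice of $f(\Lambda_+)$ versus $f(\Lambda_-)$.
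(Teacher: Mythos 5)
You should know that the paper itself contains no proof of Theorem \ref{t1}: the statement is given bare, immediately after the expansions (\ref{eq:49.4})--(\ref{eq:50.6}) and the eigenfunction property (\ref{eq:**0}), with the actual argument delegated to the references \cite{VSG2,ContMat}. Your plan reconstructs exactly the argument that this machinery is built for, so it is the intended proof: equivalence of \eqref{nlee} with the linear evolution of $\tau^\pm_\beta,\rho^\pm_\beta$ via the Wronskian relations (\ref{eq:drho-ea'}), transposition of $f(\Lambda)$ across the skew product, the eigenvalue property (\ref{eq:**0}), and---for the converse---the expansions plus the completeness Theorem \ref{t2.1}; then (\ref{ds}) by differentiating the explicit Gauss factors (\ref{eq:S_Jpm}). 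Your observation that for $s=1$ no sum of two roots of $\Delta_1^+$ is again a root, so that $X=\sum_\beta \tau^+_\beta E_\beta$ lies in an abelian subalgebra and $[J,\exp X]=X\exp X$ holds exactly, is correct and makes the Gauss-factor computation clean; likewise deriving $\dot m_1^\pm=0$, $\dot{\bf m}_2^\pm=0$ from $i\,dT/dt=[F(\lambda),T]$ and the block-scalar structure of $J$ is the same mechanism by which the paper itself asserts (\ref{eq:evol}) from the Lax representation.

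Two corrections to make before writing this up. First, $\Lambda_+$ and $\Lambda_-$ are not each self-adjoint with respect to $\biglb \cdot,\cdot\bigrb$; they are adjoint to one another, $\biglb \Lambda_+X,Y\bigrb=\biglb X,\Lambda_-Y\bigrb$. This is precisely why the ``correct choice of $f(\Lambda_+)$ versus $f(\Lambda_-)$'' that you flag at the end is forced: to land on the eigenfunction $\e^{+}_{-\beta}$ of $\Lambda_+$ you must write the NLEE with $f(\Lambda_-)$ before transposing, and vice versa for the other family. Call it mutual adjointness; under genuine self-adjointness the bookkeeping issue you describe would not even arise. Second, on signs: the paper's own displays are not a reliable calibration, because (\ref{eq:evol}) combined with (\ref{eq:rotau}) gives $i\,d\vec\tau^{\,+}/dt-\lambda^{2k}\vec\tau^{\,+}=0$, while (\ref{ds}) combined with your identity $[F,S^+_J]=f(\lambda)X\exp X$ gives $i\,d\tau^+_\beta/dt+f(\lambda)\tau^+_\beta=0$; the two parts of the theorem are therefore mutually consistent only for $f(\lambda)=-\lambda^{2k}$, not $f(\lambda)=+\lambda^{2k}$ as you state in closing. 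This is a typo-level discrepancy (the paper abounds in them), and your insistence on disciplined sign bookkeeping is the right attitude, but the identification of the Kulish--Sklyanin dispersion law should be fixed accordingly.
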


The principal series of integrals is generated by the asymptotic expansion of $\ln m_1^+(\lambda)= \sum_{k=1}^\infty I_k \lambda^{-k}$.
The first three integrals of motion:
\begin{equation}\begin{split}\label{eq:I1-3}
I_1 = -\frac{i}{2} \int_{-\infty }^{\infty } d x\, \langle Q(x), Q(x) \rangle, \qquad
I_2  = \frac{1}{2} \int_{-\infty }^{\infty } d x\, \langle Q_x, \ad_J^{-1}Q(x) \rangle , \\
I_3 = -\frac{i}{2} \int_{-\infty }^{\infty } d x\, \left ( \left \langle \ad_J^{-1} Q_x, \ad_J^{-1}Q_x \right \rangle
- \left  \langle \left [ \ad_J^{-1} Q, Q(x).\right ], \left [ \ad_J^{-1}Q, Q(x)\right ] \right  \rangle \right ).
\end{split}\end{equation}
Now $iI_1$ can be interpreted as
density of the particles, $I_2$ is the momentum and $I_3=2i H_{\rm MNLS}$. Indeed,  the
Hamiltonian equations of motion provided by $H_{(0)}=-iI_3/2$ with the Poissson brackets
\begin{equation}\label{eq:PB}
\{ q_{\alpha}(y,t) , p_{\beta}(x,t)  \} = i \delta _{\alpha,\beta} \delta (x-y), \qquad \alpha, \beta \in \Delta_1^+,
\end{equation}
coincide with the MNLS equations (\ref{eq:KSm1}). The above Poisson brackets are dual to the canonical symplectic form:
\[ \Omega _0= i \int_{-\infty }^{\infty }d x\, \tr \left(\delta \vec{p}(x) \wedgecomma \delta \vec{q}(x) \right)=\frac{1}{2i}
\biglb \ad_{J}^{-1} \delta q(x) \wedgecomma \ad_{J}^{-1} \delta q(x) \bigrb, \]
where $\wedgecomma $ means that taking the scalar or matrix product we
exchange the usual product of the matrix elements by wedge-product.
The Hamiltonian formulation of eq. (\ref{eq:KSm1}) with $\Omega _0 $ and
$H_0 $ is just one member of the hierarchy of Hamiltonian formulations provided by:
\begin{equation}\label{eq:5.2.6}
\Omega _k = {1 \over i }\biglb \ad_{J}^{-1} \delta Q \wedgecomma
\Lambda ^k \ad_{J}^{-1} \delta Q \bigrb , \qquad  H_k = i^{k+3} I_{k+3}.
\end{equation}
where $\Lambda ={1 \over 2 } (\Lambda_+ +\Lambda _-)$. We can also
calculate $\Omega _k $ in terms of the scattering data variations. Imposing the reduction $q(x)=q^\dag (x)$ we get:
\begin{equation*}
\begin{split} \Omega _k &=  {1  \over 2\pi i } \int_{-\infty }^{\infty } d\lambda \, \lambda ^k \left( \Omega
_{0}^{+}(\lambda ) -\Omega _{0}^{-}(\lambda ) \right)\\ &=  {1  \over 2\pi} \int_{-\infty }^{\infty } d\lambda \, \lambda ^k \im \left( m_1^+ (\lambda)
\left( \hat{\m}_2^+ \delta \vec{\rho}^+(\lambda) \wedgecomma \delta \vec{\tau}^+(\lambda)\right) \right).
\end{split}\end{equation*}
This allows one to prove that if we are able to cast $\Omega_{0} $
in canonical form,  then all $\Omega _k $ will also be cast in canonical form and will be pair-wise equivalent.

\section{Conclusions}

Using RHP formulated on the real axis of the complex $\lambda$-plane and compatible with the {\bf BD.I}-type symmetric spaces
$SO(2r+1)/S(O(2r -2s +1)\otimes O(2s))$, $s\geq 1$ we have derived Lax pairs for KS type models; the proper KS model is obtained for $s=1$.

Another  Riemann-Hilbert problems:  formulated on  $\mathbb{R} \oplus i\mathbb{R}$  is relevant for a new type of KS model.
We find nontrivial deep reductions of these systems and formulate their effects on the scattering matrix.
In particular we obtain new 2-component NLS equations whose Hamiltonian depends not only on $|q_1|$. and $|q_2|$, but also on
$q_1^*q_2 + q_1 q_2^*$. Thus our example comes out of the scope of Zakharov-Schulman theorem \cite{ZahSch}.

Finally, using the Wronskian relations we demonstrate that the inverse scattering method for
KS models may be understood as a generalized Fourier transforms. Thus we have a tool to derive all their fundamental properties, including the hierarchy of
equations and he hierarchy of  their Hamiltonian structures.

\section*{Acknowledgement}
I am grateful to professors F. Calogero and V. E. Zakharov for useful suggestions and comments,
and to an anonymous referee for careful reading of the manuscript.

\end{document}